\newcolumntype{+}{!{\vrule width 2pt}}
\newlength\savedwidth
\newcommand{\colorset}{\mathcal{C}}
\newcommand{\colorf}{c}
\newtheorem{theorem}{Theorem}
\newtheorem{corollary}{Corollary}
\newtheorem{lemma}{Lemma}
\pgfplotsset{ 
    discard if not/.style 2 args={
        x filter/.code={
            \edef\tempa{\thisrow{#1}}
            \edef\tempb{#2}
            \ifx\tempa\tempb
            \else
                
            \fi
        }
    }, 
}
\begin{document}

\title[On motifs in colored graphs]{On motifs in colored graphs$^\star$}


\author[D.~P.~Rubert]{Diego P.~Rubert$^1$}
\thanks{$^\star$ This is a pre-print of an article to be published in the Journal of Combinatorial Optimization.} 

\author[E.~Araujo]{Eloi Araujo$^1$}

\author[M.~A.~Stefanes]{Marco A.~Stefanes$^1$}
\email{\{diego,feloi,marco,fhvm\}@facom.ufms.br}

\author[J.~Stoye]{Jens Stoye$^2$}
\email{jens.stoye@uni-bielefeld.de}

\author[F.~V.~Martinez]{F\'abio V.~Martinez$^{1,\dagger}$}
\thanks{$^\dagger$ Corresponding author}

\address{$^1$ Faculdade de Computa\c c\~ao, Universidade Federal de Mato Grosso do Sul, Brazil}
\address{$^2$ Faculty of Technology and Center for Biotechnology (CeBiTec), Bielefeld University, Germany}




\begin{abstract}
One of the most important concepts in biological network analysis is
that of network motifs, which are patterns of interconnections that
occur in a given network at a frequency higher than expected in a
random network. In this work we are interested in searching and
inferring network motifs in a class of biological networks that can be
represented by vertex-colored graphs. We show the computational
complexity for many problems related to colorful topological motifs
and present efficient algorithms for special cases. We also present a
probabilistic strategy to detect highly frequent motifs in
vertex-colored graphs.  Experiments on real data sets show that our
algorithms are very competitive both in efficiency and in quality of
the solutions.
\end{abstract}

\maketitle

\section{Introduction}

Biological networks have been wide and deeply studied in recent years.
Their analysis provides comprehension of the underlying biological
processes, the function and the structure of their components, and of
their evolutionary relationships.  Such networks may be reaction
graphs, regulatory networks, protein-protein-interaction (PPI)
networks or metabolic pathways, all of which can be modeled by
vertex-colored graphs where the vertices represent biochemical
reactions, genes, proteins or metabolites.

A concept that takes a central role in such network analysis is the so
called \emph{network motif}, i.e., a pattern of interconnections that
occur at higher frequency inside a (biological) network than expected
in a random network~\cite{Wernicke2006}.

In this context, we consider two problem variants: \emph{Motif search}
is a graph-theoretic pattern matching problem where a small graph
(pattern) is searched in a large graph, such that structure of
occurrence of the pattern is preserved, i.e.\ its topology and
connectivity.  \emph{Motif inference} (or \emph{motif discovery})
receives as input only one (large) graph and the task is to detect
(small) subgraphs occuring at high frequency.

The first approaches to motif search in a biological network were
proposed in the context of transcriptional regulation
networks~\cite{SMMA2002} and later within PPI
networks~\cite{Kelley-etal-2003}. Since most variations of the motif
search problem are computationally
hard~\cite{LFS2006,FFHV2011,DFV2011}, several techniques are used in
order to overcome that
hardness~\cite{LFS2006,Bruckner-etal-2010,Blin-etal-2010,Kashani-etal-BMC2009}. Searching
motifs with a specific topology arose in Shlomi \textit{et
  al.}~\cite{SSRS2006} using fixed-parameter tractable algorithms for
searching motifs defined as paths within a PPI network. A more general
solution was implemented in the \textsf{Qnet} tool which searches for
motifs defined as trees~\cite{Shlomi-etal-2008}.

Lacroix \textit{et al.}~\cite{LFS2006} proposed a new approach for the
motif search problem in vertex-colored graphs, where the motif
topology is not taken into consideration and only connectivity is
required. Such a motif is called \emph{colored motif}. This motif
search problem has received much
attention~\cite{GS2013,PZ-JDA-2014,PHM-DAM2016}. A fixed-parameter
algorithm was presented by Lacroix \textit{et al.}~\cite{LFS2006},
extended to infer all colored motifs in metabolic networks and
implemented as a tool called \textsf{MOTUS}~\cite{SLS2009}.
\textsf{Torque}~\cite{Bruckner-etal-2010} is another solution that
aims to search a given colorful motif in a PPI network. (A motif is
\emph{colorful} if each color occurs only once.) Later,
\textsf{GraMoFoNe}~\cite{Blin-etal-2010} generalized the problem,
searching for a given colored motif, not only colorful, in a PPI
network.

Most of the previous implementations are concerned only with motif
search. Another important task is enumerating all motifs of a given
size in a network. \textsf{MOTUS} can solve such a problem only for
small motifs (of size up to 7), due to its computational cost.

This paper is an extension of two previous works presented in
conferences~\cite{AS2013,RAS2015} where we showed the hardness of some
colorful topological motif problems and presented algorithms for
searching and enumerating colorful motifs. Many of those previous
results are, for the sake of clarity, also presented in this paper and
can be summarized as follows: ({\it i}) Given a vertex-colored graph
$G$ representing a biological network and a colorful topological motif
$M$, finding a simple subgraph of $G$ isomorphic to $M$ is NP-hard
(Theorem~\ref{theo:SMS-NP-c}); ({\it ii}) Given a vertex-colored graph
$G$ and a colorful motif $M$, finding an induced subgraph of $G$
isomorphic to $M$ is NP-hard (Theorem~\ref{theo:ISM}); ({\it iii})
Given a vertex-colored graph $G$ and a colorful tree $M$, finding $M$
as a simple subgraph of $G$ can be performed in polynomial time
(Algorithm~\ref{alg:TCG}); ({\it iv}) Given a vertex-colored graph $G$
and a colorful tree $M$, an algorithm to enumerate all occurrences of
$M$ in $G$ is provided (Algorithm~\ref{alg:all-colorful}); ({\it v}) A
framework for evaluating high frequency and inferring motifs in
colored-vertex graphs is provided (Section~\ref{sec:infe}); and
({\it vi}) Experimental evaluation of algorithms for motif search and
inference is performed, including analysis of the high frequency
evaluation framework and comparison of these algorithms with other
similar tools (Section~\ref{sec:expe}).

Besides that, in this paper we improve some of those previous results
and present new results, namely:
\begin{enumerate}
\item[({\it i})] We propose a new related problem where given a
  vertex-colored graph $G$, a colorful motif $M$ and an integer $k$,
  we want to find at least $k$ (vertex) disjoint occurrences of $M$
  in $G$. We prove this is an NP-hard problem (Theorem~\ref{result6});
\item[({\it ii})] We also propose a problem where given two
  vertex-colored graphs $G$ and $H$ and an integer $k$, we want to
  find a colorful tree with $k$ vertices which is a subgraph of both
  $G$ and $H$. We prove this is an NP-hard problem
  (Theorem~\ref{theo:common-k-tree});
\item[({\it iii})] We show, using a particular data structure, that
  the algorithm for finding a subgraph of a given vertex-colored graph
  $G$ isomorphic to a given colorful motif $M$
  (Algorithm~\ref{alg:TCG}) can be improved from quadratic to linear
  running time (Theorem~\ref{theo:TCG-time});
\item[({\it iv})] We present a method (Algorithm~\ref{alg:numocc})
  that, given a vertex-colored graph $G$ and a colorful tree motif
  $M$, computes the number of occurrences of $M$ in $G$, showing that
  it can be performed in linear time (Theorem~\ref{theo:numocc});
\item[({\it v})] We speedup considerably the previous version of our
  motif inference algorithm (Algorithm~\ref{alg:inference}) by using
  as subroutine the new algorithm for counting the number of
  occurrences of motifs (Algorithm~\ref{alg:numocc}), updating
  experimental tests (Section~\ref{sec:expe}) related to this
  algorithm.
\end{enumerate}

This paper is organized as follows. Section~\ref{sec:prel} provides
basic definitions and notations. Section~\ref{sec:comp} presents
complexity results for problems of interest. Search algorithms such as
for finding subgraphs isomorphic to colorful trees, finding a maximum
clean graph, finding all colorful motifs in a clean subgraph, and
finding the number of occurrences of colorful motifs in a graph are
given in Section~\ref{sec:sear}. Next, Section~\ref{sec:infe} provides
a framework to detect subgraphs in a network occurring with high
frequency, and also an algorithm for inferring statistically
significant colorful motifs in a given vertex-colored
graph. Section~\ref{sec:expe} shows experimental results of
implementations of sequential and parallel proposed algorithms. A
conclusion is given in Section~\ref{sec:conc}.

\section{\label{sec:prel} Preliminaries}

Let $G = (V_G, E_G)$ be a graph such that $V_G$ is a set of vertices
and $E_G \subseteq \binom{V_G}{2}$ is a set of edges (unordered pairs
of vertices). Let $\colorset$ be a set of colors. A \emph{color
  function} $\colorf : X \rightarrow \colorset$ assigns a color in
$\colorset$ to each element in $X$. Typically, $X$ is a subset of the
set of vertices of a graph. Thus, a \emph{vertex-colored graph} is a
graph with colored vertices, i.e., a graph $G$ such that $\colorf :
V_G \rightarrow \colorset$ assigns a color in $\colorset$ to each
vertex in $V_G$. Sometimes in this text, we refer to a vertex-colored
graph simply as a graph.

We denote by $uv$ an edge $\{u, v\}$ of a graph and we say that
vertices $u$ and $v$ are \emph{adjacent}. A \emph{(color-) isomorphism}
between graphs $G$ and $H$ is a bijection $f$ from $V_G$ to $V_H$ such
that, for any two elements $u,v \in V_G$, $u$ and $v$ are
adjacent in $G$ if and only if $f(u)$ and $f(v)$ are adjacent in
$H$. Furthermore, for each pair $(v, f(v))$, where $v \in V_G$ and
$f(v) \in V_H$, we have $\colorf(v) = \colorf(f(v))$. Graphs $G$ and
$H$ are \emph{(color-) isomorphic} if there exists an isomorphism
between them, and we denote this by $G \cong H$.

Let $G$ be a graph, define $\colorf(X) := \{\colorf(x) : x \in X\}$
for any subset $X \subseteq V_G$. A \emph{subgraph} of $G$ is a graph
$H$ such that $V_H \subseteq V_G$ and $E_H \subseteq E_G$. We denote
it by $H \subseteq G$. A subgraph of $G$ \emph{induced} by a subset
$X$ of $V_G$ is the graph $H$ such that $V_H = X$ and $E_H = E_G \cap
\binom{X}{2}$.  Such a graph $H$ is also denoted by $G[X]$.

For a subset $X$ of $V_G$ of a graph $G$, we denote by $G - X$ the
subgraph $G[V_G \setminus X]$. Similarly, if $A$ is a subset of $E_G$,
we denote by $G - A$ the graph $(V_G, E_G \setminus A)$. Moreover,
when it is implicit where adding a vertex and an edge in a graph $G$
(for instance, adding a vertex $x$ and an edge $xy$ in a graph $G$,
with $y \in V_G$), then we use the notation $G + x$. In some cases, we
use the notation $G + xy$ to denote the addition of edge $xy$ and the
vertex $x$ to $G$, supposing that $y \in V_G$.  We write $H \prec G$
if $H$ is isomorphic to a subgraph of $G$.

Let $G$ be a vertex-colored graph and $M \prec G$. The \emph{restriction}
of color function $\colorf$ to $V_M$
is the function $\colorf_M = (\colorf | V_M)$
and thus $M$ is also a colored graph. If $\colorf_M$ is a injection we
say that $M$ is a \emph{colorful} (sub)graph (informally, each color
in $\colorset$ is assigned to at most one vertex in $V_M$). 
In this setting, we say that such a graph is a \emph{colorful motif}.

An \emph{exact occurrence} of a colorful motif $M$ in a vertex-colored
graph $G$ is a vertex set $S$ in $V_G$ such that $G[S]$ is connected,
$M \prec G[S]$ and $|V_M| = |S|$. In this paper, we are interested in
the following computational problems.

\medskip

\noindent \textbf{Problem} \textsc{Subgraph-Motif}($G, M$): given a
vertex-colored graph $G$ and a colorful motif $M$, does there exist a
subgraph of $G$ isomorphic to $M$, that is, $M \prec G$?

\medskip

\noindent \textbf{Problem} \textsc{Induced-Subgraph-Motif}($G, M$):
given a ver\-tex-colored graph $G$ and a colorful motif $M$, does
there exist an induced subgraph of $G$ isomorphic to $M$?

\medskip

\noindent \textbf{Problem} \textsc{All-Motifs}($G, M$): given a
vertex-colored graph $G$ and a colorful motif $M$, enumerate all occurrences
of $M$ in $G$, such that $M$ is a subgraph of~$G$.

\medskip

\noindent \textbf{Problem} \textsc{$k$-Disjoint-Motifs}($G, M, k$):
given a vertex-col\-ored graph $G$, a colorful motif $M$, and an
integer $k > 0$, do there exist at least $k$ disjoint occurrences of
$M$ in $G$, such that $M$ is a subgraph of $G$?

\medskip

\noindent \textbf{Problem} \textsc{Common-$k$-Tree}($G, H, k$): given
two vertex-colored graphs $G$ and $H$, and an integer $k > 0$, does
there exist a colorful tree $T$ with $k$ vertices such that $T \prec
G$ and $T \prec H$?

\section{\label{sec:comp} Complexity results}

We start this section by proving the computational complexity of
\textsc{Subgraph-Motif}.

\begin{theorem} \label{theo:SMS-NP-c}
Problem \textsc{Subgraph-Motif} is NP-complete.
\end{theorem}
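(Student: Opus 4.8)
The plan is to establish membership in NP first and then prove NP-hardness by a reduction from \textsc{Clique}.

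For membership, observe that a positive instance is certified by a subset $S \subseteq V_G$ with $|S| = |V_M|$ together with a bijection $f : V_M \to S$. One checks in polynomial time that $f$ preserves colors, i.e.\ $\colorf(v) = \colorf(f(v))$ for every $v \in V_M$, and that $f$ maps every edge of $M$ to an edge of $G$. This certifies $M \prec G$, so the problem lies in NP.

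For hardness, I would reduce from \textsc{Clique}, which asks whether a graph $G' = (V', E')$ contains a clique of size $k$. Given such an instance, set the motif $M := K_k$ and assign to its $k$ vertices the pairwise distinct colors $1, \dots, k$, so that $M$ is colorful. The target graph $G$ is built from $k$ disjoint copies $V'_1, \dots, V'_k$ of $V'$, where every vertex of the $i$-th copy receives color $i$; thus the color-$i$ vertices of $G$ are exactly the vertices of copy $i$. For every edge $uv \in E'$ and every pair of distinct indices $i \ne j$, I add to $G$ the two edges $u_i v_j$ and $v_i u_j$, where $x_\ell$ denotes the copy of $x \in V'$ in $V'_\ell$; no edges are placed inside a single copy. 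This construction is clearly polynomial in the size of $G'$ and in $k$.

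It then remains to argue the equivalence. Since $M$ is colorful with colors $1, \dots, k$, any subgraph of $G$ isomorphic to $M$ must use exactly one vertex $w_i$ of each color, one from each copy $V'_i$, and these $k$ vertices must be pairwise adjacent in $G$ because $M = K_k$ forces all $\binom{k}{2}$ edges into the image. By construction $w_i$ and $w_j$ (with $i \ne j$) are adjacent in $G$ precisely when their underlying vertices form an edge of $E'$. Conversely, a clique $\{x^1, \dots, x^k\}$ of $G'$ yields the vertices $x^1_1, \dots, x^k_k$, one per copy, which are pairwise adjacent in $G$ and carry all $k$ colors, hence form a subgraph isomorphic to $M$. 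Thus $M \prec G$ if and only if $G'$ has a $k$-clique.

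I expect the main obstacle to be the injectivity requirement: a careless color encoding could let the selected vertices collapse onto fewer than $k$ distinct vertices of $G'$. Choosing $M = K_k$ neutralizes this, since every pair of selected vertices is constrained to be adjacent and the absence of loops in $G'$ guarantees that distinct copies project to distinct underlying vertices. Checking this point carefully is the crux of verifying correctness of the reduction.
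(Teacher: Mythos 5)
Your proof is correct, but it takes a different route from the paper's. The paper also exploits the key fact you identified---that a colorful clique motif forces the choice of exactly one vertex per color class and pairwise adjacency among the chosen vertices---but it reduces from \textsc{3-sat} rather than \textsc{Clique}: each clause becomes a color class of three literal-vertices, edges join non-opposite literals, and $M$ is a colorful clique on $m$ colors, so a color-isomorphic copy of $M$ is exactly a choice of one non-contradictory literal per clause, i.e.\ a satisfying assignment. Your construction is the classical \textsc{Clique}-to-multicolored-clique blow-up ($k$ colored copies of $V'$, with inter-copy edges mirroring $E'$), and your handling of the one subtle point is right: since $E'$ has no loops, no edge $x_i x_j$ is ever created for the same underlying vertex $x$, so the $k$ selected vertices project to $k$ distinct, pairwise adjacent vertices of $G'$. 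Each approach buys something: the paper's reduction shows hardness even when every color class has exactly three vertices (a bounded color-multiplicity restriction your construction does not give, since your classes have size $|V'|$), while yours is more modular and yields a cleaner, textbook-style equivalence; your NP-membership argument (explicit vertex set plus color- and edge-preserving bijection as certificate) is also stated more carefully than the paper's.
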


\begin{proof}
  We first show that \textsc{Subgraph-Motif} belongs to NP. Given a
  graph $G$ and a colorful motif $M$, the certificate is a graph $G'$
  such that $G' \prec G$. A verification algorithm can easily check
  this in polynomial time.

  We provide a reduction from the \textsc{3-sat} problem, which is
  NP-complete~\cite{GJ1979}. Given an arbitrary Boolean formula $\Phi$
  in conjunctive normal form (CNF) with $m$ clauses $C_1, \ldots, C_m$
  as an instance of \textsc{3-sat}, we construct $G$ with $3m$
  vertices, where each vertex represents a literal of a clause in
  $\Phi$. Vertices $u$ and $v$ in $G$ have the same color if and only
  if $u$ and $v$ come from the same clause in $\Phi$. An edge $uv \in
  E_G$ if and only if literals representing vertices $u$ and $v$ are
  not opposite. We construct $M$ as a colorful clique with $m$
  vertices, whose colors are the $m$ distinct colors of vertices in
  $G$ (Fig~\ref{fig:inducedNPcompleteTeo1}). Clearly, this
  transformation can be done in polynomial time.

  \begin{figure}[!h]
    \begin{center}
      \includegraphics{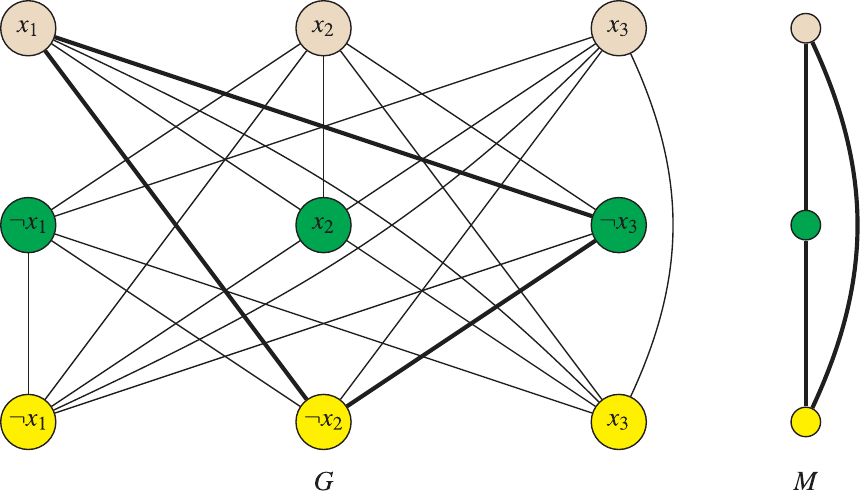}
      \caption{\label{fig:inducedNPcompleteTeo1} Graphs $G$ and $M$
        constructed from formula $\Phi = C_{1} \wedge C_{2} \wedge
        C_{3}$, where $C_{1} = x_{1} \vee x_{2} \vee x_{3}$, $C_{2} =
        \neg x_{1} \vee x_{2} \vee \neg{x}_{3}$, and $C_{3} =
        \neg{x}_{1} \vee \neg x_{2} \vee x_{3}$. Edges whose literals
        come from the same clause are not represented. Notice that
        $x_{1}, \neg x_{2}, \neg x_{3}$ is a truth assignment satisfying
        $\Phi$ and $M \prec G$.}
    \end{center}
  \end{figure}

  We argue that $\Phi$ is satisfiable if and only if $M \prec
  G$. Suppose that $\Phi$ is satisfiable. Then there exists a truth
  assignment to the variables satisfying all clauses. Let $S$ be a set
  of vertices in $G$ corresponding to $m$ literals with values
  ``true'', one for each clause, that satisfies $\Phi$. By definition
  of $G$, it follows that $G[S] \cong M$ and thus $M \prec G$. On the
  other hand, suppose that $M \prec G$. Then there exists a subgraph
  $H \subseteq G$ such that $H \cong M$. It means that $H$ is a
  colorful clique and it follows that the set of vertices in $H$
  represents a set $S$ of $m$ non-opposite literals in $\Phi$, one per
  clause. Consequently, a truth assignment to literals in $S$
  satisfies $\Phi$.
\end{proof}

Now we show the computational complexity for the related problem
\textsc{Induced-Subgraph-Motif}.

\begin{theorem} \label{theo:ISM}
Problem \textsc{Induced-Subgraph-Motif} is NP-complete, even when the given
colorful motif is a tree.
\end{theorem}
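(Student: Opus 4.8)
The plan is to first check membership in NP and then to establish NP-hardness by a reduction in which the target motif is a star $K_{1,k}$, which is a tree.

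For membership in NP, I would take as a certificate a vertex set $S \subseteq V_G$ and verify in polynomial time that $G[S] \cong M$ with $|S| = |V_M|$. Colorfulness of $M$ is exactly what makes this verification easy: since every vertex of $M$ carries a distinct color, once one checks that $S$ contains exactly one vertex of each color appearing in $M$, the color-respecting bijection $f : V_M \to S$ is uniquely determined. It then suffices to check that $uv \in E_M$ if and only if $f(u)f(v) \in E_{G[S]}$ for every pair, which is the induced-isomorphism condition and runs in polynomial time.

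For hardness I would reduce from \textsc{Multicolored Independent Set}: given a graph $H$ whose vertices are partitioned into classes $V_1, \dots, V_k$, decide whether there is an independent set containing exactly one vertex from each class. (This problem is NP-complete; it is obtained from ordinary \textsc{Independent Set} by taking $k$ color-coded copies of the vertex set and joining the copies of equal or adjacent original vertices.) From such an instance I build the vertex-colored graph $G$ by assigning color $i$ to every vertex of $V_i$, adding one new apex vertex $r$ with a fresh color $0$, making $r$ adjacent to every vertex of $H$, and keeping all original edges of $H$. The motif $M$ is the star with center colored $0$ and one leaf of each color $1, \dots, k$. This construction is polynomial, and $M$ is a tree.

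The correctness argument is where the induced requirement does the work, and it is also the main obstacle. An induced copy of $M$ in $G$ must map the center of $M$ to $r$, since $r$ is the only vertex of color $0$; the remaining image vertices form a choice $v_i \in V_i$, one per color, and $r$ is automatically adjacent to each $v_i$ by construction. Demanding that $G[\{r, v_1, \dots, v_k\}]$ be \emph{isomorphic} to the star, and not merely contain it as a subgraph, forces the leaves to be pairwise non-adjacent, i.e.\ $\{v_1, \dots, v_k\}$ is a multicolored independent set of $H$; conversely, any such independent set together with $r$ induces exactly the star. The delicate point, and the reason the induced variant behaves differently from \textsc{Subgraph-Motif}, is pinning down that the absence of extra edges among the leaves is equivalent to independence in $H$: I would verify carefully that adding $r$ creates no edges among the $V_i$ and that the colors rigidly fix the bijection, so that no spurious isomorphism can arise from an unintended image of the star's center or from vertices outside $\{r, v_1, \dots, v_k\}$.
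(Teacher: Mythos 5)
Your proposal is correct and is essentially the paper's own argument in different clothing: the paper likewise adds a unique-colored hub (the ``core'') adjacent to every other vertex and asks for an induced colorful star, so that the induced requirement forces the chosen leaves to be pairwise non-adjacent. The only difference is the source problem --- the paper reduces directly from \textsc{3-sat}, making two literal vertices adjacent exactly when the literals are opposite, which is precisely a multicolored-independent-set instance (classes $=$ clauses, edges $=$ conflicts) of the kind you reduce from.
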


\begin{proof}
  As for \textsc{Subgraph-Motif} (Theorem~\ref{theo:SMS-NP-c}),
  we can show easily that \textsc{In\-duced-Subgraph-Motif} belongs to
  NP.

  To show NP-hardness, we present a reduction from \textsc{3-sat},
  similar to Theorem~\ref{theo:SMS-NP-c}. Given an arbitrary Boolean
  formula $\Phi$ in CNF with $m$ clauses $C_1, \ldots C_m$ as an
  instance of \textsc{3-sat}, we construct a graph $G$ with $3m+1$
  vertices, where each one of the first $3m$ vertices represents a
  literal of a clause in $\Phi$. The last extra vertex is called
  \emph{core}. Vertices $u$ and $v$ in $G$ have the same color if and
  only if $u$ and $v$ come from the same clause in $\Phi$. The core
  vertex has a color different from any other vertex, and it is
  adjacent to each one of the $3m$ remaining vertices. Moreover, two
  vertices representing literals are adjacent if and only if the
  literals representing them are opposite. Furthermore, $M$ is a
  colorful star with $m+1$ vertices, whose colors are in $V_G$ and the
  color of its center is the color of the core
  (Fig~\ref{fig:inducedNPcomplete}). Notice that such a
  transformation can be performed in polynomial time.

  \begin{figure}[!h]
    \begin{center}
      \includegraphics{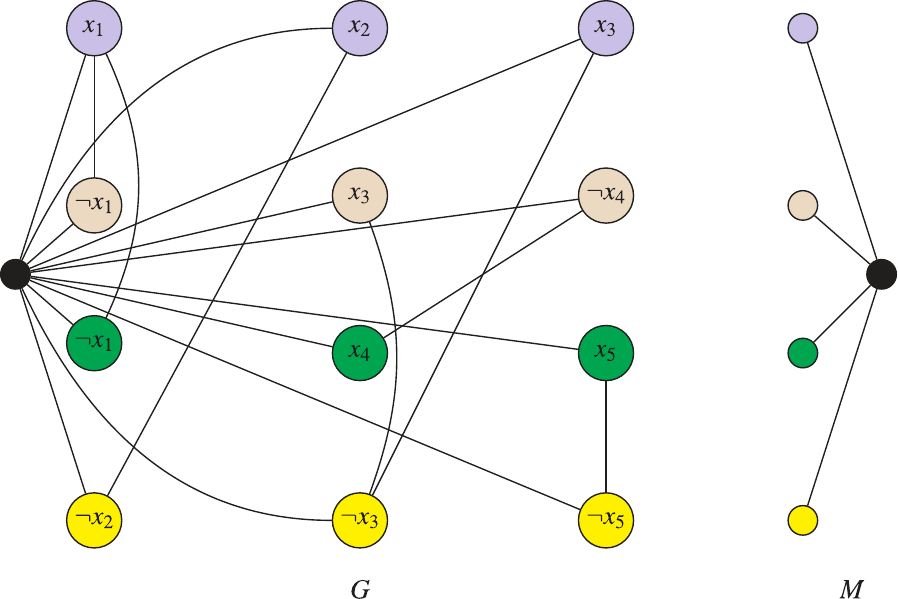}
      \caption{\label{fig:inducedNPcomplete} Graphs $G$ and $M$
        constructed from formula $C_{1} \wedge C_{2} \wedge C_{3} \wedge
        C_{4}$, where $C_{1} = x_{1} \vee x_{2} \vee x_{3}$, $C_{2} =
        \neg x_{1} \vee x_{3} \vee \neg{x}_{4}$, $C_{3} = \neg{x}_{1}
        \vee x_{4} \vee x_{5}$, and $C_{4} = \neg{x}_{2} \vee
        \neg{x}_{3} \vee \neg{x}_{5}$.}
    \end{center}
  \end{figure}

  Now, we show that $\Phi$ is satisfiable if and only if there exists
  an induced subgraph of $G$ isomorphic to star~$M$. Sup\-pose that
  $\Phi$ is satisfiable. Then, there exists a truth as\-sign\-ment to
  the variables satisfying all clauses. Let $S$ be a set of vertices
  corresponding to $m$ literals with values ``true'', belonging to
  different clauses, that satisfy $\Phi$. By definition of $G$, we
  have $G[S \cup \{ \text{core} \}] \cong M$ and thus $G[S \cup \{
    \text{core} \}]$ is an induced subgraph of $G$ isomorphic to
  $M$. Conversely, suppose that there exists an induced subgraph $H$
  of $G$ isomorphic to $M$. Since the color of the core is unique in
  $G$ and $M$, the color of the center of $H$ has the same color as
  the vertex core in $M$ and the set of vertices in $V_H - \{
  \text{core} \}$ represents a set $S$ of $m$ not opposite literals in
  $\Phi$, one per clause. Hence, a truth assignment to literals in $S$
  satisfies $\Phi$.
\end{proof}

\newcommand{\bfit}[1]{\text{\textbf{\textit{#1}}}}

In the following we prove the computational complexity for
\textsc{$k$-Disjoint-Motifs} using a reduction from a well-known
computational problem \textsc{Longest-Common-Subsequence}. Before
doing that we need a few definitions.  Given a finite set of sequences
$R = \{S_1, \ldots, S_p\}$, we denote by $s_{i,j}$ the $j$th symbol in
sequence $S_i$. We say that a $p$-tuple of integers $\bfit{j} = (j_1,
\ldots, j_p)$ is a \emph{column} in $R$ if $s_{i,j_i} =
s_{i+1,j_{i+1}}$ for each $1 \le i < p$. We also say that columns
$\bfit{j} = (j_1,\ldots,j_p)$ and $\bfit{k} = (k_1,\ldots,k_p)$ are
\emph{crossing} in $R$ when there exists some $i$, $1\le i < p$, such
that $j_i \le k_i$ and $j_{i+1} \geq k_{i+1}$ (or $j_i \ge k_i$ and
$j_{i+1} \le k_{i+1}$).  Otherwise, they are \emph{non-crossing},
i.e., when $j_i < k_i$ (or $j_i > k_i$) for all $i$, $1\le i\le
p$. Given a set $C$ of columns in $R$, we say that $C$ is a
\emph{common subsequence in $R$} if $\bfit{j}$ and $\bfit{k}$ are
non-crossing for each pair of distinct columns $\bfit{j}$ and
$\bfit{k}$ in $C$. Fig~\ref{fig:commonsubseq} presents a graphical
example of a common subsequence.

\begin{figure}[h]
  \begin{center}
    \includegraphics{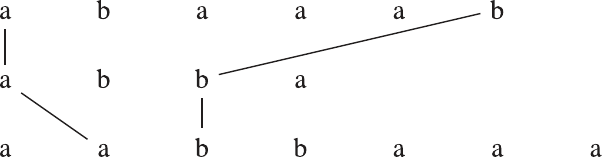}
    \caption{\label{fig:commonsubseq} A graphical representation of a
      common subsequence $C = \{(1, 1, 2), (6, 3, 3)\}$, with two
      non-crossing columns, of three sequences.}
  \end{center} 
\end{figure}


Now we can present the following decision problem, which has been
shown to be NP-complete~\cite{maier1978}:

\medskip

\noindent \textbf{Problem} \textsc{Longest-Common-Subsequence}($R,
k$): given a set $R$ of sequences and an integer $k$, does there exist
a common subsequence $C$ in $R$ such that $|C| \geq k$?

\medskip

We are ready to demonstrate the following result.

\begin{theorem}\label{result6}
Problem \textsc{$k$-Disjoint-Motifs} is NP-complete.
\end{theorem}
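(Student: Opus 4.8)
The plan is to first settle membership in NP and then give a polynomial reduction from \textsc{Longest-Common-Subsequence}. For membership, a certificate is a family of $k$ vertex sets $S_1,\dots,S_k\subseteq V_G$; a verifier checks in polynomial time that the sets are pairwise disjoint, that each has size $|V_M|$, that each $G[S_\ell]$ is connected, and that $M\prec G[S_\ell]$. The last test is polynomial precisely because the colors of $M$ are all distinct, so each vertex of $M$ has a unique candidate image and the embedding is essentially forced. Hence \textsc{$k$-Disjoint-Motifs} is in NP.

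For hardness I would build, from an instance $(R,k)$ with $R=\{S_1,\dots,S_p\}$, a vertex-colored graph $G$ and a colorful motif $M$ so that occurrences of $M$ in $G$ correspond exactly to columns of $R$, and---crucially---two occurrences are vertex-disjoint if and only if the two columns are non-crossing. Since a common subsequence is exactly a set of pairwise non-crossing columns, this makes ``$k$ disjoint occurrences'' coincide with ``a common subsequence of size $k$''. The backbone of $G$ is one vertex $v_{i,a}$ for every position $a$ of every sequence $S_i$, colored by a color $\gamma_i$ depending only on the index $i$; I link consecutive layers only between symbol-equal positions, so that any connected, color-respecting image of $M$ must pick one position per sequence whose symbols agree along the chain, i.e.\ a valid column.

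The heart of the reduction is a \emph{crossing gadget} inserted between consecutive layers $i$ and $i+1$ that forces the equivalence ``crossing $\Leftrightarrow$ shared vertex''. Sharing a backbone vertex $v_{i,a}$ already handles the touching case $j_i=k_i$, but the strict interleaving case ($j_i<k_i$ while $j_{i+1}>k_{i+1}$) produces no common position and must be detected separately. I would draw the segment from position $j_i$ to position $j_{i+1}$ and place, at a polynomial family of intermediate ``heights'', one gadget vertex per segment, indexed by the $x$-coordinate at which that segment crosses the height and colored by the pair (level, height); the occurrence of column $\bfit{j}$ is routed through exactly these gadget vertices. Two occurrences then share a gadget vertex precisely when their segments meet at an included height, which by the definition of crossing is exactly when the columns cross at that level. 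Choosing one height for every rational value at which two integer-endpoint segments can possibly intersect guarantees that every strict crossing is caught, while two segments staying strictly on the same side never collide, so non-crossing columns yield fully vertex-disjoint occurrences.

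The resulting $M$ is a single colorful path (a caterpillar) whose colors are $\gamma_1,\dots,\gamma_p$ interleaved with the per-level, per-height gadget colors; because every column is routed through the same prescribed color sequence, all occurrences are isomorphic to this one $M$. This is the main obstacle: $M$ must be one fixed colorful object of size polynomial in the input, yet its occurrences have to realize the dense, pairwise crossing relation purely through vertex sharing---a fixed-size motif provably cannot, which is why the motif (and the height family) must grow with the instance. Once the gadget is in place, correctness is a two-way check. If $R$ has a common subsequence $C$ with $|C|=k$, its $k$ pairwise non-crossing columns give $k$ occurrences sharing neither a position vertex nor a gadget vertex, hence disjoint. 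Conversely, $k$ disjoint occurrences are $k$ columns no two of which share a vertex, so at every level they avoid both touching and strict interleaving, are therefore pairwise non-crossing, and form a common subsequence of size $k$. It remains to argue that $G$ has no spurious occurrences of $M$ outside this correspondence, which follows from the distinct colors of $M$ together with the symbol-matching adjacencies forcing every color-respecting connected image to be exactly one routed column.
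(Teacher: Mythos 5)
Your NP-membership argument is fine (and in fact more detailed than the paper's, which simply asserts polynomial-time verifiability), and your overall architecture is the same as the paper's: reduce from \textsc{Longest-Common-Subsequence}, make occurrences of $M$ correspond to columns of $R$, and make vertex-disjointness of occurrences correspond to non-crossing of columns, with shared backbone vertices handling the touching case and a shared gadget vertex handling strict interleaving. The gap is in your crossing gadget. Because you realize each inter-level connection as a \emph{chain} of gadget vertices colored only by (level, height), and because two crossing segments are merged into a single vertex at their intersection height, a color-respecting image of your path motif can enter that merged vertex along segment $A$ and leave along segment $B$: both continuations carry the same (level, next-height) color, so nothing in the motif forbids the switch. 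Such hybrid occurrences select backbone positions $(m_i, m_{i+1})$ that need not be symbol-equal, hence are not columns, and they can be pairwise disjoint even when no two non-crossing columns exist. Concretely, take $p=2$ with $S_1 = abc$ and $S_2 = ccba$, giving segments $a\colon 1\to 4$, $b\colon 2\to 3$, $c\colon 3\to 1$ and $c\colon 3\to 2$; every pair of columns crosses or touches, so the longest common subsequence has length $1$. Yet the straight occurrence along $c\colon 3\to 2$ and the hybrid that follows $a$ from bottom position $1$ up to its intersection with $c\colon 3\to 1$ at height $2/5$ and then follows $c\colon 3\to 1$ to top position $1$ have distinct $x$-coordinates at every included height and distinct backbone vertices, so they are two vertex-disjoint occurrences of your motif. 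Your instance would answer ``yes'' for $k=2$ while the LCS instance answers ``no''; the converse direction of your reduction fails. Your closing sentence--that distinct colors together with symbol-matching adjacencies exclude spurious images--is precisely the step that breaks: the merged crossing vertices, which are the whole point of your gadget, are what enable the switching.

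The paper's construction avoids this by collapsing each inter-level connection to a \emph{single} core vertex per symbol-equal pair (so there is no interior route on which to switch; a core vertex is adjacent among sequence vertices only to its two defining positions, forcing symbol equality), and by attaching the crossing detection as leaves rather than as a transit chain: for each pair of core vertices opposing in level $i$ there is one transversal color, every core vertex of level $i$ has exactly one neighbor of each such color, and the two opposing core vertices share their transversal neighbor. The motif is then a caterpillar--the path colored $c_1, d_1, c_2, d_2, \ldots, c_p$ with $|\chi_i|$ leaves hanging from each $d_i$--whose occurrence is fully forced once the core vertices are chosen, so no spurious occurrences arise. Any repair of your gadget seems to lead back to that design: keeping your chain but coloring gadget vertices per segment would prevent switching, but then different columns would be routed through different color sequences and would no longer all be occurrences of one fixed colorful motif $M$.
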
 

\begin{proof}
  It is easy to show that \textsc{$k$-Disjoint-Motifs} belongs to
  NP. Given a graph $G$, a colorful motif $M$, an integer $k$, and a
  set of $k' \geq k$ subgraphs of $G$, it can be checked, in
  polynomial time, if each of those $k'$ subgraphs is isomorphic to
  $M$ and no two of them share a vertex.

  To complete the NP-hardness proof, we present a reduction from
  \textsc{Longest-Common-Subsequence}. Given a set of sequences $R =
  \{S_1, \ldots, S_p\}$ and an integer $k$ as an input of
  \textsc{Longest-Common-Subsequence}, let us construct a graph $G$
  and a colorful motif $M$ such that there exists a common subsequence
  in $R$ of length at least $k$ if and only if there exist at least
  $k$ disjoint occurrences of $M$ in $G$.

  \noindent (Graph $G$). We describe the graph $G$ constructed from
  $R$. The set of vertices $V_G$ is partitioned into \emph{sequence
    vertices} $V^S$, \emph{core vertices} $V^C$, and \emph{transversal
    vertices} $V^X$. Two vertices in different subsets have different
  colors. The set of edges $E_G$ is partitioned into \emph{core edges}
  $E^C$ and \emph{transversal edges} $E^X$.

  \noindent ((Sequence vertices $V^S$)). Each symbol $s_{i,j}$ in a
  sequence of $R$ is represented by a vertex $v_i^j$ in $V^S$. All
  vertices from the sequence $S_i$ have color $c_i$, and $c_i \not=
  c_{i'}$ for $i \not= i'$.

  \noindent ((Core vertices $V^C$ and core edges $E^C$)). The set of
  core vertices is partitioned into $p - 1$ disjoint sets $V_1^C,
  \ldots, V_{p-1}^C$. Each pair of vertices $(v_i^j, v_{i+1}^{j'})$
  such that $s_{i,j} = s_{i+1,j'}$ is represented by a core vertex
  $\mathbf{v}_i^z$ in $V_i^C$, where $z = (v_i^j, v_{i+1}^{j'})$. All
  vertices in $V_i^C$ have color $d_i$, and $d_i \neq d_{i'}$ for $i
  \not= i'$. The set $E^C$ is partitioned into $p - 1$ disjoint
  sets $E_1^C, \ldots, E_{p-1}^C$, and for each core vertex
  $\mathbf{v}_i^z$, representing a pair $z = (v_i^j, v_{i+1}^{j'})$,
  we have edges $(v_i^j, \mathbf{v}_i^z)$ and $(\mathbf{v}_i^z,
  v_{i+1}^{j'})$ in $E_i^C$. We say that a pair of core vertices
  $(\mathbf{v}_i^y, \mathbf{v}_i^z)$ in $V_i^C$, with $y =
  (v_i^j,v_{i+1}^{j'})$ and $z = (v_i^k, v_{i+1}^{k'})$, is
  \emph{opposing in $i$} if $j < k$ and $j' > k'$.
  
  \noindent ((Transversal vertices $V^X$ and transversal edges
  $E^X$)). Sets $V^X$ and $E^X$ are partitioned into $p-1$ sets
  $V_1^X, \ldots, V_{p-1}^X$ and $E_1^X, \ldots, E_{p-1}^X$,
  respectively. Two vertices in distinct sets $V_i^X$ and $V_{i'}^X$
  have different colors. Let $\chi_i$ be the set of pairs of (core)
  vertices opposing in $i$ and suppose that $V^C_i = \{\mathbf{v}_i^1,
  \ldots, \mathbf{v}_i^\ell\}$ with $\ell = |V_i^C|$. Now, observe
  that we relabel arbitrarily the superscripts of vertices in $V_i^C$
  with integer numbers. For each $x$ in $\chi_i$, we list a sequence
  $x^1, x^2, \ldots, x^\ell$ such that $x^j = x^{j'}$ if the pair of
  core vertices $\mathbf{v}_i^j$ and $\mathbf{v}_i^{j'}$ is opposing
  in $i$. Then, we add all the $\ell - 1$ distinct elements of this
  sequence to the set of transversal vertices $V_i^X$. Each core
  vertex $\mathbf{v}_i^j$ in $V_i^C$ is adjacent to the transversal
  vertex $x^j$ in $V_i^X$. Furthermore, two transversal vertices in
  $V_i^X$ have the same color if and only if they come from a pair of
  core vertices opposing in $i$. This completes the description of
  $G$.

  \noindent (Motif $M$). The colorful tree $M$ consists of a path of
  length $2p-1$, where vertices have colors $c_1, d_1, c_2, d_2,
  \ldots, c_{p-1}, d_{p-1}, c_p$. Moreover, each vertex of color $d_i$
  is adjacent to $|\chi_i|$ vertices whose colors are those used for
  coloring the transversal vertices in $V_i^X$. See
  Fig~\ref{fig:disjointmotifs}.

  \begin{figure}[h]
    \begin{center}
      \includegraphics{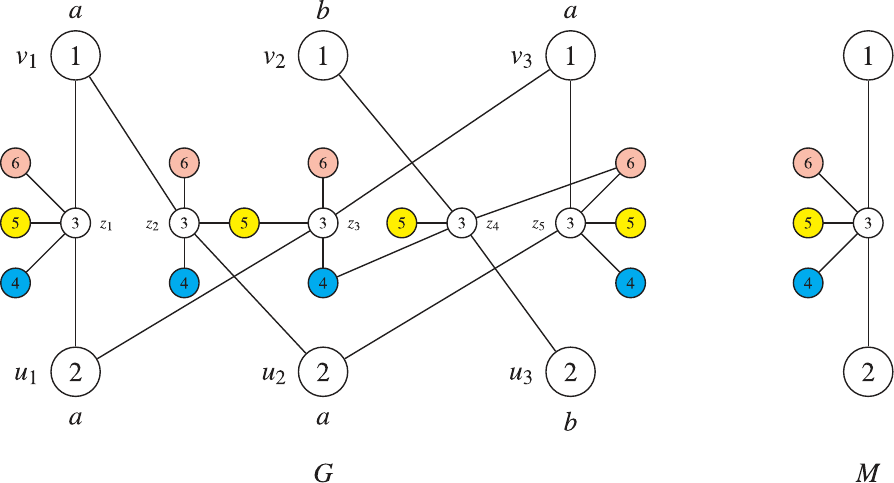}
      \caption{\label{fig:disjointmotifs} Graph and motif from
        sequences $R = \{\textit{aba}, \textit{aab}\}$. Vertices with
        colors 1, 2, or 3 are transversal vertices, those with colors
        4 and 5 are sequence vertices, and those with color 6 are core
        vertices.}
    \end{center} 
  \end{figure}
  
  Graphs $G$ and $M$ can be obtained in polynomial time with respect
  to the size of $R$. Then we must show that there exist at least $k$
  disjoint occurrences of $M$ in $G$ if and only if there exists a
  common subsequence in $R$ of length at least~$k$.

  Consider a vertex set $K$ such that $K \subseteq V_G$. Denote by
  $K^Z$ the set $K \cap V^Z$, for $Z \in \{S, C, X\}$.
  
  Notice that there exists an obvious correspondence between columns
  in $R$ and occurrences of $M$ in $G$, i.e., for each column
  $\bfit{j} = (j_1, \ldots, j_p)$ in $R$, there exists an occurrence
  $J$ in $G$ such that $J^S = \{v_1^{j_1}, \ldots,
  v_p^{j_p}\}$. Therefore, to show that there exists a common
  subsequence in $R$ of length at least $k$ if and only if there exist
  at least $k$ disjoint occurrences of $M$ in $G$, it is enough to
  prove that two columns $\bfit{j} = (j_1, \ldots, j_p)$ and $\bfit{k}
  = (k_1, \ldots, k_p)$ in $R$ are non-crossing if and only if
  subgraphs $J$ and $K$ are disjoint in $G$.

  Suppose that $\bfit{j}$ and $\bfit{k}$ are non-crossing. Then, $j_i
  \not= k_i$ for each $i$ which implies that $v_i^{j_i} \not=
  v_i^{k_i}$ for each $i$ and therefore, $J^S \cap K^S =
  \emptyset$. By construction, all core vertices in $J$ are adjacent
  to two vertices in $J^S$ and all core vertices in $K$ are adjacent
  to two vertices in $K^S$, which implies, since $J^S \cap K^S =
  \emptyset$, that $J^C \cap K^C = \emptyset$. Below, until the end
  of this paragraph, we are going to show that an arbitrary vertex $u
  \in J^X$ does not belong to $K^X$, which implies that $J^X \cap K^X
  = \emptyset$. By construction, since $u \in J^X$, it follows that
  $u$ must be connected to vertex $\mathbf{v}_i^y \in J^C$, where $y =
  (v_i^{j_i}, v_{i+1}^{j_{i+1}})$ for some $i$. Since $J^C \cap K^C =
  \emptyset$ and $u$ is connected to $\mathbf{v}_i^y \in J^C$, if $u$
  is a leaf, then $u$ is not connected to any vertex in $K^C$ which
  implies that $u \not\in K^X$. Then, assume that $u$ is not a
  leaf. It follows by construction that $u$ is connected to another
  vertex $\mathbf{v}_i^z \in V^C$, where $z = (v_i^m, v_{i+1}^n) \not=
  y$, for some pair $m, n$. This implies that $\mathbf{v}_i^y$ and
  $\mathbf{v}_i^z$ are opposing in $i$, and thus $j_i < m$ and
  $j_{i+1} > n$, or $j_i > m$ and $j_{i+1} < n$. Therefore, since
  $\bfit{j}$ and $\bfit{k}$ are non-crossing columns in $R$, we have
  that $s_{i, m} \not\in \bfit{k}$ or $s_{i+1, n} \not\in \bfit{k}$
  and consequently, $v_i^m \not\in K$ or $v_{i+1}^n \not\in
  K$. However, by construction, if an arbitrary core vertex
  $\mathbf{v}$ belongs to an occurrence, then each vertex $u$ adjacent
  to $\mathbf{v}$ also belongs to this occurrence, which implies that
  $u \not\in K$. Since $J^S \cap K^S = J^C \cap K^C = J^X \cap K^X =
  \emptyset$, we conclude that $J$ and $K$ are disjoint in $G$.
 
  Conversely, suppose that $\bfit{j}$ and $\bfit{k}$ are crossing in
  $R$. Then, for some integer $i$, we have that $j_i \leq k_i$ and
  $j_{i+1} \geq k_{i+1}$ or $j_i \geq k_i$ and $j_{i+1} \leq
  k_{i+1}$. W.l.o.g., suppose that $j_i \leq k_i$ and $j_{i+1} \geq
  k_{i+1}$. If $j_i = k_i$ or $j_{i+1} = k_{i+1}$, we have that
  $v_i^{j_i}$ or $v_{i+1}^{j_{i+1}}$ belongs to $J^S \cap K^S$, which
  implies that $J$ and $K$ are not disjoint. Then, we assume that $j_i
  < k_i$ and $j_{i+1} > k_{i+1}$. By construction, we have
  $\mathbf{v}_i^y = (v_i^{j_i}, v_{i+1}^{j_{i+1}})$ in $J^C$ and
  $\mathbf{v}_i^z = (v_i^{k_i}, v_{i+1}^{k_{i+1}})$ in $K^C$ opposing
  in $i$ and there exists a vertex $u$ in $V^X$ adjacent to both
  $\mathbf{v}_i^y$ and $\mathbf{v}_i^z$. Since all vertices adjacent
  to $\mathbf{v}_i^y$ belong to $J$ and all vertices adjacent to
  $\mathbf{v}_i^z$ belong to $K$, it follows that $u \in J \cap K$,
  which implies that $J$ and $K$ are not disjoint.
\end{proof}

Finally, we show NP-hardness of \textsc{Common-$k$-Tree}.

\begin{theorem} \label{theo:common-k-tree}
  Problem \textsc{Common-$k$-Tree} is NP-complete even if $G$ is a tree with
  maximum degree three.
\end{theorem}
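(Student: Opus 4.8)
\textbf{(Membership in NP.)} My first step is to settle membership in NP. A candidate solution is simply a colorful tree $T$ on $k$ vertices; I do \emph{not} include the embeddings in the certificate, because testing $T \prec G$ and $T \prec H$ for a \emph{colorful} tree $T$ is already polynomial by Algorithm~\ref{alg:TCG} (indeed linear, by Theorem~\ref{theo:TCG-time}). Thus a verifier checks that $T$ is a tree with $|V_T| = k$ that is colorful, and then runs the colorful-tree matching algorithm twice, once against $G$ and once against $H$; this is polynomial, so \textsc{Common-$k$-Tree} $\in$ NP. The restriction that $G$ be a tree of maximum degree three only makes the verification easier, so membership is unaffected.

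\textbf{(Hardness: construction.)} For hardness I plan a reduction from \textsc{3-sat}, in the spirit of Theorems~\ref{theo:SMS-NP-c} and~\ref{theo:ISM}, using colorfulness as the mechanism that turns ``choose at most one'' into a Boolean decision. Given $\Phi$ with variables $x_1,\dots,x_n$ and clauses $C_1,\dots,C_m$, I introduce one color $\kappa_j$ per clause. The tree $G$ is built as a caterpillar of maximum degree three: a spine with pairwise distinct colors threads through the $n$ variables, and for each $x_i$ the spine carries two vertices to which I attach, respectively, the roots of a \emph{true-branch} and a \emph{false-branch}, both roots receiving the same color $\gamma_i$. Off the true-branch I hang (one per vertex, to respect degree three) pendants colored $\kappa_j$ for every clause $C_j$ satisfied by the literal $x_i$, and symmetrically for the false-branch. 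Because $T$ is colorful it can contain at most one of the two $\gamma_i$-colored roots, and a branch whose root is omitted is disconnected from the spine; hence $T$ commits to exactly one polarity of $x_i$ and can only collect the clause colors offered by that polarity. The second graph $H$ is then used to pin down the target color set: $H$ is a simple (tree) gadget whose colorful subtrees of size $k$ are exactly those whose color set contains every $\kappa_j$, so that a common colorful $k$-tree is forced to acquire all $m$ clause colors. Since a tree in $G$ can harvest $\kappa_j$ only through a chosen (true) literal satisfying $C_j$, this is precisely a satisfied clause, while colorfulness silently collapses a multiply-satisfied clause to a single use of $\kappa_j$.

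\textbf{(Hardness: correctness.)} With $k$ set to the number of forced colors, the two directions should follow. From a satisfying assignment I select, for each $i$, the branch of the true polarity, route the spine through the chosen branches, and pick up one $\kappa_j$-pendant per clause from a satisfying literal; this yields a colorful tree that embeds in $G$ by construction and in $H$ because it carries exactly the prescribed color set. Conversely, a common colorful $k$-tree must, by $H$, contain all clause colors, and must, by $G$ and the shared $\gamma_i$ roots, choose a single polarity per variable; reading those polarities off gives an assignment under which every clause, having contributed its color, is satisfied.

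\textbf{(Main obstacle.)} The hard part will be faithfully encoding the \emph{disjunctive}, multi-variable nature of a clause through a single tree embedding while keeping $G$ a tree of maximum degree three. Bounded-degree \emph{trees} forbid the usual ``branch-and-rejoin'' selection gadget, since a rejoin would create a cycle, so the per-variable choice has to be forced entirely by repeated colors rather than by topology, and the independence this induces between variables must be reconciled with the clause constraints; my resolution is to let each clause be witnessed by at most one literal via the single-use-of-$\kappa_j$ property of colorful trees, and to offload the ``all clauses present'' requirement onto $H$. Getting the bookkeeping exactly right---the precise value of $k$, the auxiliary structural colors on the branches, and the verification that no unintended colorful $k$-tree (for instance one that avoids the spine or mixes branch structure) can satisfy both $G$ and $H$ at once---is the delicate point of the argument.
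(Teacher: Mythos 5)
Your NP-membership argument is fine, and in fact slightly more explicit than the paper's: since the motif is colorful, the certificate need only be the tree $T$ itself, and Theorem~\ref{theo:TCG-time} lets the verifier test $T \prec G$ and $T \prec H$ in linear time. The hardness half, however, has a genuine gap, and you have correctly located it yourself in your final paragraph without closing it. The fatal point is your treatment of $H$. You use $H$ as a ``color-set filter'': a gadget whose colorful size-$k$ subtrees are exactly those whose \emph{color set} contains every $\kappa_j$. But $T \prec H$ is a \emph{topological} condition, not a color-set condition: the color-isomorphism must preserve adjacency. The candidate trees arising in $G$ have assignment-dependent shapes --- which branch of each variable is taken, how many clause pendants hang off each branch, and where --- so $H$ would have to contain, as subtrees, all the shapes induced by satisfying assignments while excluding every unintended colorful $k$-tree (including ones that mix polarities by avoiding the $\gamma_i$ roots, or that pad to size $k$ with structural colors in unforeseen ways). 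No such $H$ is constructed, and designing one is essentially as hard as the original problem; likewise the value of $k$ is never fixed, which matters because colorfulness makes a multiply-satisfied clause contribute $\kappa_j$ only once, so trees for different assignments have different sizes unless you add a padding mechanism you do not specify. As written, neither direction of the correctness argument can be checked.

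It is worth noting how much shorter the paper's route is, and that it was available to you: Fellows \textit{et al.}~\cite{FFHV2007} already proved NP-completeness of deciding, for a vertex-colored tree $G$ of maximum degree three, whether there is a colorful tree $T \prec G$ with $|V_T| = |\colorset|$. Taking $H = G$ and $k = |\colorset|$ exhibits that problem as a particular instance of \textsc{Common-$k$-Tree}, which immediately gives the theorem, including the degree-three restriction. Your proposal amounts to attempting to reprove the Fellows \textit{et al.}\ result from \textsc{3-sat} from scratch (their proof is nontrivial precisely because bounded-degree trees forbid the branch-and-rejoin gadgets you mention), while simultaneously inventing the $H$-gadget; both pieces remain open in your sketch, so the proof does not go through in its current form.
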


\begin{proof}
  Fellows \textit{et al.}~\cite{FFHV2007} showed that the following
  problem is NP-complete: Given a vertex-colored tree $G$ with maximum
  degree three and a set of colors $\colorset$, decide if there exists
  a colorful tree $T \prec G$ such that $|V_T| = |\colorset|$. This
  problem can be formulated as a particular instance of
  \textsc{Common-$k$-Tree}($G, G, |\colorset|$). Thus,
  \textsc{Common-$k$-Tree} is NP-complete even if $G$ is a tree with
  maximum degree three.
\end{proof}

\section{\label{sec:sear} Search algorithms}

In this section we present algorithms for searching various types of
motifs in vertex-colored graphs. The following section presents an
efficient algorithm such that we are given a vertex-colored graph $G$
and a colorful tree $T$ and we want to find a subgraph of $G$
isomorphic to $T$. We also want to find all occurrences of a colorful
motif $T$ in a vertex-colored graph $G$. Next, we first introduce a
data structure simpler than $G$, called maximum clean graph, from
which it is possible to obtain all subgraphs. We can then obtain more
smoothly all colorful motifs in a clean subgraph later. Finally, we
describe a linear time algorithm for computing the number of
occurrences of colorful motifs in a vertex-colored graph without
enumerating all occurrences.

\subsection{\label{sec:subg} Finding subgraphs isomorphic to a colorful tree}

If the given colorful motif $M$ for the \textsc{Subgraph-Motif}
problem is a tree, we can solve the problem efficiently. In the
following we present a linear time algorithm for
\textsc{Subgraph-Motif} for this particular type of instances.

Let $G$ be a graph and $T$ be a tree which is a colorful motif.
Algorithm~\textsc{tcg} starts by identifying an arbitrary leaf $x$ of
$T$ in Step 1. Step 2 collects in a set $A$ all vertices of $G$ with
the color of $x$. Then, if $T$ is a graph with only one vertex,
Algorithm~\textsc{tcg} verifies whether the set $A$ is empty,
returning yes or no in Step 3. Otherwise, Step 4 takes $y$ as the only
vertex adjacent to $x$ in $T$. Then, Step 5 builds a set $B$ with all
vertices in $G$ with the color of $y$ and with no neighbors with the
color of $x$. Finally, Step 6 is a recursive call to \textsc{tcg}
removing vertices in $A$ and $B$ from $G$ and the leaf $x$ from
$T$. We suppose that graphs $G$ and $T$ are given to
Algorithm~\textsc{tcg} as adjacency lists.

\begin{algorithm}
\caption{~~\textsc{tcg}$(G,T)$} \label{alg:TCG}
\begin{algorithmic}[1]

  \footnotesize
  
  \REQUIRE{graph $G$, colorful tree $T$}
  \ENSURE{\textsc{yes/no}, whether $T \prec G$ or not}

  \smallskip
  
  \STATE let $x$ be some leaf of $T$ 
  
  \STATE let $A := \{a \in V_G : \colorf(a) = \colorf(x)\}$ 
  
  \IF{$T$ is trivial, i.e., it has one single vertex}

    \RETURN $A \neq \emptyset$

  \ENDIF
    
  \STATE let $y$ be the vertex such that $xy \in E_T$
  
  \STATE let $B := \{ b \in V_G : \colorf(b) = \colorf(y) \text{ and
    for all } b' \in V_G \text{ such } \text{ that } \newline bb'\in
  E_G, \text{ then } \colorf(b') \neq \colorf(x)\}$
  
  \RETURN \textsc{tcg}($G - (A \cup B), T - \{x\}$)
  
\end{algorithmic}
\end{algorithm}

The following shows the correctness of Algorithm~\textsc{tcg}.

\begin{theorem} \label{theo:TCG-correct}
Given a vertex-colored graph $G$ and a colorful tree $T$,
Algorithm~\textsc{tcg} returns correctly whether $G$ has a subgraph
isomorphic to $T$ or not.
\end{theorem}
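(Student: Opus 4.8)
The plan is to argue by induction on $|V_T|$, showing that one recursive step of \textsc{tcg} faithfully reduces the question ``$T \prec G$?'' to the same question on a strictly smaller colorful tree. The base case is the trivial tree: if $T$ has a single vertex $x$, then $T \prec G$ holds exactly when $G$ contains a vertex of color $\colorf(x)$, i.e.\ exactly when $A \neq \emptyset$, which is precisely what Step~3 returns. For the inductive step, fix the leaf $x$ chosen in Step~1 and its unique neighbor $y$ in $T$ (unique because $x$ is a leaf), and set $T' := T - \{x\}$ and $G' := G - (A \cup B)$. Note $T'$ is again a colorful tree with $|V_{T'}| = |V_T| - 1$, so the inductive hypothesis will apply to the recursive call \textsc{tcg}$(G', T')$. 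The entire argument then reduces to proving the single equivalence
\[
 T \prec G \quad\Longleftrightarrow\quad T' \prec G'.
\]

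For the forward direction, suppose $f \colon V_T \to V_G$ realizes an embedding of $T$ as a subgraph of $G$ (so $f$ is injective, color-preserving, and edge-preserving). I would show that the restriction $f' := f|_{V_{T'}}$ is an embedding of $T'$ into $G'$, for which it suffices to check that no image $f(t)$, $t \in V_{T'}$, lies in $A \cup B$. Colorfulness of $T$ does the first half: for every $t \neq x$ we have $\colorf(f(t)) = \colorf(t) \neq \colorf(x)$, so $f(t) \notin A$. For $B$, recall that $B$ consists only of vertices of color $\colorf(y)$; since $y$ is the only vertex of $T'$ with that color, the only image that could fall in $B$ is $f(y)$, but $f(y)$ has the neighbor $f(x)$ of color $\colorf(x)$ in $G$, so by definition $f(y) \notin B$. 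Hence $f'$ maps $V_{T'}$ into $G'$ and preserves the edges of $T'$ (no endpoint was deleted), giving $T' \prec G'$.

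For the reverse direction, suppose $g \colon V_{T'} \to V_{G'}$ is an embedding of $T'$ into $G'$. I would extend it to an embedding of $T$ into $G$ by placing $x$. Since $g(y) \in V_{G'} \subseteq V_G \setminus B$ and $g(y)$ has color $\colorf(y)$, the defining property of $B$ forces $g(y)$ to possess a neighbor $a \in V_G$ with $\colorf(a) = \colorf(x)$. Define $f(x) := a$ and $f(t) := g(t)$ for $t \neq x$. The map is color-preserving by construction, and the only new edge of $T$ incident to $x$ is $xy$, whose image $a\,g(y)$ is an edge of $G$; all other edges are inherited from $g$. Injectivity holds because $a \in A$, whereas every $g(t)$ lies in $G'$ and thus avoids $A$, so $a$ collides with no previously used image. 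Therefore $f$ realizes $T \prec G$.

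The step I expect to be the main obstacle, and the crux of the whole proof, is the correct bookkeeping around the set $B$: one must argue simultaneously that removing $B$ is \emph{sound} (a vertex of $B$ can never serve as the image of $y$, since it has no $\colorf(x)$-colored neighbor to host $x$) and that it is \emph{complete} (any image of $y$ surviving in $G'$ is guaranteed a free $\colorf(x)$-neighbor for the extension). The accompanying subtlety is that the extension vertex $a$ must be distinct from all images already used by $g$; this is exactly why deleting $A$ as well is essential, as it both strips color $\colorf(x)$ from the smaller instance and ensures the freshly chosen $a \in A$ cannot clash with $g(V_{T'}) \subseteq V_{G'}$. Once this equivalence is in place, the induction closes immediately.
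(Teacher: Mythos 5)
Your proof is correct and takes essentially the same approach as the paper's: an induction (phrased there as a chain of recursive calls) establishing the single equivalence $T \prec G \iff T - \{x\} \prec G - (A \cup B)$, with the forward direction by restricting an embedding (colorfulness keeps images out of $A$, and the edge from the image of $y$ to the image of $x$ keeps the image of $y$ out of $B$) and the reverse direction by extending an embedding via the $\colorf(x)$-colored neighbor guaranteed by the definition of $B$. If anything, you are slightly more explicit than the paper on the injectivity of the extension (the fresh vertex $a \in A$ cannot collide with images lying in $G'$, which avoids $A$), a point the paper leaves implicit.
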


\begin{proof}
  Denote by $G_i$ and $T_i$ the input graphs in the $i$th recursive
  call to Algorithm~\textsc{tcg}, for $i \geq 1$, where $G_1 = G$ and
  $T_1 = T$. (Since there is no change of colors in the graphs $G$ and
  $T$ for any call of the algorithm, we drop those indices for the
  color function.) To establish the correctness, consider a sequence
  of calls $1, 2, \ldots, |V_T|$ of Algorithm~\textsc{tcg}.  Then, it
  is enough to show that
  \[
  T_{i-1} \prec G_{i-1} \text{ if and only if } T_i \prec G_i\:,
  \]
  for any recursive call $i > 1$. If $|V_T| = 1$, for some recursive
  call $i$, Algorithm~\textsc{tcg} runs the three first steps and
  returns correctly whether $G$ has a subgraph isomorphic to~$T$.
  
  Suppose first that $T_{i-1} \prec G_{i-1}$ at the $(i-1)$st call of
  Algorithm~\textsc{tcg}. Since we have a subsequent call $i$, there
  exists a subgraph $G'$ of $G_{i-1}$ such that $T_{i-1} \cong
  G'$. Still, there exists a vertex $a$ in $G_{i-1}$ such that
  $\colorf(a) = \colorf(x)$ for a chosen leaf $x$ in $T_{i-1}$, $a$ is
  adjacent to a vertex $b$ in $G'$ with color $\colorf(b) =
  \colorf(y)$, and $xy$ is an edge in $T_{i-1}$. Then, $G' - \{a\}
  \subseteq G_i$ and $T_i = T_{i-1} - \{x\} \cong G' - \{a\}$. Thus,
  $T_i \prec G_i$.

  Now, suppose that $T_i \prec G_i$ at the $i$th call of
  Algorithm~\textsc{tcg}.  Then, there exists a subgraph $G'$ of $G_i$
  such that $T_i \cong G'$. Let $b$ be a vertex in $G'$ and $y$ be a
  vertex in $T_i$ such that $\colorf(b) = \colorf(y)$ and $y$ is
  adjacent to the removed vertex $x$ in $T_{i-1}$. Since $b$ was not
  removed in $G_i$, there should exist at least one vertex $a$ in
  $G_{i-1}$ such that $\colorf(a) = \colorf(x)$ and $ab \in
  E_{G_{i-1}}$.  Thus, the subgraph $G''= G' + ab$ is a subgraph of
  $G_{i-1}$ and $G'' \cong T_{i-1}$.  Hence, $T_{i-1} \prec G_{i-1}$.
\end{proof}

A quick inspection of the pseudocode of Algorithm~\textsc{tcg} gives
us a quadratic running time in the size of $G$. Nevertheless, some
preprocessing and a more careful analysis lead to a linear running
time as the following result states.

\begin{theorem} \label{theo:TCG-time}
Given a vertex-colored graph $G$ and a colorful tree $T$,
Algorithm~\textsc{tcg} returns correctly whether $G$ has a subgraph isomorphic
to $T$ in time linear in the size of $G$.
\end{theorem}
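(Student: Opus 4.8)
Since correctness is already settled by Theorem~\ref{theo:TCG-correct}, the plan is to exhibit data structures and an amortized analysis under which the total work of Algorithm~\textsc{tcg} is $O(|V_G| + |E_G|)$, i.e.\ linear in the size of $G$. The naive quadratic bound comes from two operations that a literal reading recomputes from scratch at every recursive call: forming the color class $A$ in Step~2 and, worse, testing in Step~5 for \emph{each} candidate vertex of color $\colorf(y)$ whether it has a neighbor of color $\colorf(x)$. The first thing I would do is preprocess $G$ into, for every color $c$, a bucket $L_c$ holding the currently present vertices of color $c$, stored as a doubly linked list so that deleting a vertex costs $O(1)$; I would also keep the original adjacency lists untouched together with a per-vertex ``alive'' flag and a timestamped mark array (so that marks reset in $O(1)$). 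Building the buckets is $O(|V_G|)$, and precomputing a leaf-peeling order of $T$ is $O(|V_T|)$, which is dominated by $|V_G|$ since we may answer \textsc{no} outright whenever $|V_T| > |V_G|$.

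The key reorganization is to compute $B$ by pushing marks outward from $A$ rather than by pulling neighborhoods in from the candidates. Concretely, at the call that peels leaf $x$ (color $c_x$) with neighbor color $c_y$, I take $A := L_{c_x}$, then scan the adjacency lists of the vertices in $A$ and mark every neighbor whose color is $c_y$. A vertex $b$ of color $c_y$ has a neighbor of color $c_x$ in the current graph if and only if $b$ is marked, because $A$ is exactly the set of surviving color-$c_x$ vertices. I then traverse the bucket $L_{c_y}$ once: marked vertices are kept, while the unmarked ones constitute $B$ and are deleted from their bucket and flagged dead. Finally I discard the whole bucket $L_{c_x}$ (the set $A$), flag its vertices dead, and reset the marks. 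The trivial base case of Step~3 is just the test $L_{c_r} \neq \emptyset$ for the last remaining color $c_r$.

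The crux is the amortized accounting, which I would split into three parts. (i) The neighborhood scans of the sets $A$ cost $\sum_i \sum_{a \in A_i} \deg_G(a)$; since the buckets are color classes and each color is peeled as a leaf at most once (as $T$ is colorful), the sets $A_i$ are pairwise disjoint, so every vertex has its adjacency list scanned at most once and the total is at most $\sum_{v} \deg_G(v) = 2|E_G|$. (ii) The traversals of the buckets $L_{c_y}$ are the delicate part: a vertex $b$ examined there is either placed in $B$, in which case I charge the examination to $b$ itself (it is deleted permanently, so this happens once per vertex), or it is kept, in which case $b$ has a neighbor in $A_i$ and I charge the examination to one such edge $ba$ with $a \in A_i$. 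Because distinct peels use disjoint $A_i$ of \emph{distinct} colors, the edges charged to $b$ across different peels are distinct, so $b$ absorbs at most $\deg_G(b)$ ``kept'' charges; summing, part (ii) is $O(|V_G| + |E_G|)$. (iii) Bucket deletions, dead-flagging, and mark resets are each bounded by the work already counted in (i) and (ii). Adding the $O(|V_G|)$ preprocessing yields the claimed linear bound.

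The step I expect to be the genuine obstacle is part~(ii): when a single vertex of $T$---for instance the center of a star---is the neighbor $y$ of many peeled leaves, its color bucket $L_{c_y}$ is traversed once per such leaf, and a careless bound would reintroduce a quadratic factor. The whole argument hinges on the observation that colorfulness of $T$ forces the successive leaf colors, hence the successive sets $A_i$, to be disjoint, which is exactly what lets me charge each ``kept'' traversal step to a fresh edge of $b$. I would therefore isolate this disjointness as an explicit claim before the accounting and make sure the bucket data structure supports the $O(1)$ deletions that the charging tacitly assumes.
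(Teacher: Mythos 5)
Your proposal is correct and follows essentially the same route as the paper's proof: preprocess color buckets (the paper's $\mathbf{C}_\gamma$) and a leaf list for $T$, compute $B$ by marking the neighbors of $A$ and scanning the bucket of color $\colorf(y)$, and amortize by charging each removed vertex once and each examined-but-kept vertex to an incident edge joining it to $A$. Your part~(ii) merely makes explicit the disjointness of the successive sets $A_i$ (forced by the colorfulness of $T$) that the paper's charging argument uses implicitly, which is a welcome clarification but not a different proof.
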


\begin{proof}
  We start by preprocessing graphs $G$ and $T$ in linear
  time. Visiting all vertices in $V_T$ we have a list of leaves
  $\mathbf{L}_T$ of $T$. Moreover, for each color $\gamma$ in
  $\colorset$ we create a list of vertices $\mathbf{C}_\gamma$ in
  $V_G$, such that each vertex $v$ in $\mathbf{C}_\gamma$ satisfies
  $\colorf(v) = \gamma$.

  Steps~1--4 can then be performed in $O(1)$ time: Step~1 can be done by
  obtaining a vertex from $\mathbf{L}_T$ and Step~2 by obtaining
  vertices from $\mathbf{C}_\gamma$ not removed from $V_G$ such that
  $\gamma = \colorf(x)$. Steps~3 and~4 are trivial.

  The overall time spent in Step~5 across all recursive calls is
  linear in the size of $G$. Notice that whenever we remove $x$ from
  $T$, we have to add $y$ to $\mathbf{L}_T$ if $y$ becomes a leaf.

  Step~6 is the most tricky one. First of all, we visit vertices in
  $A$ and mark their neighbors. Then we add to set $B$ each unmarked
  vertex in $\mathbf{C}_\gamma$ such that $\gamma = \colorf(y)$
  (according to Step~5). Finally, neighbors of vertices in $A$ are
  unmarked. Notice that in Step~6 a vertex can be analyzed many times
  across the recursive calls. However, if a vertex is removed at some
  point, it happens at most once. On the other hand, a vertex is
  analyzed and not removed at most a number of times equal to the
  number of adjacent edges. This means that the vertex is adjacent to
  at least one vertex in $A$. Hence, the overall time spent in Step~6,
  considering all the recursive calls, is linear in the size of $G$.

  Therefore, the overall running time of Algorithm~\textsc{tcg} is
  linear in the size of the instance $G$.
\end{proof}

In general, we are not only interested in whether there exists a
subgraph of $G$ isomorphic to a colorful motif $T$, but also in
finding one or all different such subgraphs.  Actually, finding one
subgraph of a graph $G$ isomorphic to a colorful motif $T$ can be
achieved by a minor change of Algorithm~\textsc{tcg}. Namely, we
maintain a copy $G'$ of $G$ and a pointer array $p$ representing the
predecessors of each vertex $v$ in $V_{G'}$ responsible for $v$ to
become a candidate belonging to some subgraph of $G$ isomorphic to
$T$. Initially $p[v]$ is \textsc{nil} for all $v \in V_{G'}$. During
Step 5, for each non-removed vertex $v$ in $V_{G}$ with $\colorf(v) =
\colorf(y)$ the list $p[v]$ is updated, containing each vertex $w$ in
$V_{G'}$ adjacent to $v$ such that $\colorf(w) = \colorf(x)$. Hence, a
motif can easily be found after the execution of
Algorithm~\textsc{tcg} using the array $p$. The second task, i.e.,
finding all occurrences of a colorful motif $T$ in $G$, is realized by
algorithms described in the following.

\subsection{\label{sec:maxi} Maximum clean subgraphs}

Finding all occurrences of a colorful motif $T$ in a vertex-colored
graph $G$ can be a highly time consuming task, since the number of
occurrences of some $T$ in $G$ may in fact be exponential in the worst
case. See an example in Fig~\ref{fig-exponentialexample}. Hence,
finding all isomorphic subgraphs of $T$ in $G$ can be prohibitive in
the general case. Nevertheless, it is useful to have a structure
simpler than $G$ from which it is possible to obtain all subgraphs
of $G$ isomorphic to $T$. This structure is described below.

\medskip

\begin{figure}[h]%
  \centering%
  \includegraphics{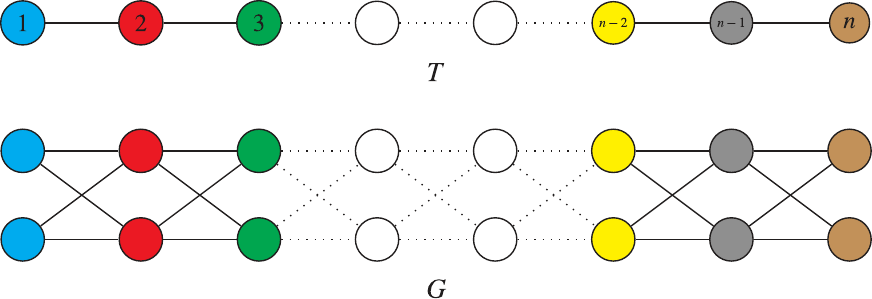}
  \caption{\label{fig-exponentialexample} A simple path $T$ with $n$
    vertices, where the first vertex has color 1 (cyan), the second
    has color 2 (red), and so on. Graph $G$ has two vertices for each
    color in $\colorset$ and each vertex with color $i$, $1 \leq i <
    n$, is adjacent to both vertices of color $i + 1$ in $G$. Thus, we
    have $2^n$ occurrences of $T$ in $G$ and therefore there exist
    $\Omega(2^n)$ occurrences of $T$ in $G$ in the worst case.}
\end{figure}

We say that a graph $H$ is a \emph{clean subgraph regarding $T$} if
each vertex and each edge in $H$ is a vertex, respectively an edge, of
a subgraph of $H$ isomorphic to $T$. For a vertex-colored graph $G$,
we say that $H$ is the \emph{maximum clean subgraph of $G$ regarding
  $T$} if $H$ is a clean subgraph regarding $T$ and any subgraph of
$G$ isomorphic to $T$ is also a subgraph of $H$.
Algorithm~\textsc{mcg} finds the maximum clean subgraph of $G$
regarding $T$.

\begin{algorithm}
\caption{~~\textsc{mcg}$(G, T)$} \label{alg:MCG}  
\begin{algorithmic}[1]

  \footnotesize
  
  \REQUIRE{graph $G$, colorful tree $T$}
  \ENSURE{maximum clean subgraph $H$ of $G$ regarding $T$}

  \smallskip
  
  \STATE $H := G$
  \FOR{each $v \in V_H$} \label{mcg:line:for1s}
    \IF{$\colorf(v)$ is not a color in $T$}
       \STATE $H := H - \{ v \}$ \label{mcg:line:for1f}
    \ENDIF
  \ENDFOR
  \FOR{each $vv' \in E_H$}
    \IF{there exists no edge $ww'$ in $T$ such that $\colorf(v) =
      \colorf(w)$ and $\colorf(v') = \colorf(w')$} 
      \STATE $H := H - \{ vv' \}$ \label{mcg:line:for2f}
    \ENDIF
  \ENDFOR     
  \WHILE{there exist $v \in V_H$ and $w, w' \in V_T$ such that
    $\colorf(v) = \colorf(w)$ and $ww' \in E_T$, but there does not
    exist $v' \in V_H$ such that $vv' \in E_H$ and  $\colorf(v') =
    \colorf(w')$} \label{mcg:line:while1s}  
    \STATE $H := H - \{ v \}$ \label{mcg:line:while1f}
  \ENDWHILE
  \RETURN $H$
\end{algorithmic}
\end{algorithm}

Since vertices and edges removed from $H$ by Algorithm~\textsc{mcg}
cannot be vertices and edges of any subgraph of $G$ isomorphic to $T$,
we can prove that $H$ is the maximum clean subgraph of $G$ regarding
$T$. We do this through Lemmas~\ref{lem:x-subgraph}--\ref{lem:z-subgraph}.
First we can verify the following auxiliary result.

\begin{lemma} \label{lem:H-V}
Given a graph $G$ and a colorful tree $T$, let $H$ be a
graph obtained by Algorithm~\textsc{mcg}. Let $v \in V_T$ be a
leaf and let $\mathcal{V} = \{w \in V_H : \colorf(w) =
\colorf(v)\}$ be the set of vertices in $H$ with color
$\colorf(v)$. Then, \textsc{mcg}$(H, T - \{v\})$ returns $H -
\mathcal{V}$.
\end{lemma}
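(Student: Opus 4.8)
The plan is to follow Algorithm~\textsc{mcg} executed on the pair $(H, T-\{v\})$ and show that its first loop deletes exactly $\mathcal{V}$, while the edge loop and the \textsc{while} loop delete nothing. The argument rests on the \emph{termination conditions} that $H$ enjoys simply by being the output of \textsc{mcg}$(G,T)$, namely the negations of the three removal conditions at the moment the algorithm halts: (P1) every vertex of $H$ has a color occurring in $T$; (P2) every edge $ab$ of $H$ admits a color-matching edge $ww'$ of $T$ (that is, $\colorf(a)=\colorf(w)$ and $\colorf(b)=\colorf(w')$); and (P3) for every $a\in V_H$ and every edge $ww'\in E_T$ with $\colorf(a)=\colorf(w)$, the vertex $a$ has a neighbor $a'$ in $H$ with $\colorf(a')=\colorf(w')$. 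I would record these three facts first. I deliberately avoid invoking that $H$ is the maximum clean subgraph, since that is precisely the statement this chain of lemmas is meant to establish.

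First I would treat the initial vertex loop. Because $T$ is colorful and $v$ is a leaf, the color $\colorf(v)$ occurs exactly once in $T$, so the colors of $T-\{v\}$ are those of $T$ with $\colorf(v)$ deleted. Together with (P1), this shows that the first loop of \textsc{mcg}$(H,T-\{v\})$ removes a vertex of $H$ if and only if its color is $\colorf(v)$, i.e.\ if and only if it lies in $\mathcal{V}$. Hence after the first loop the working graph is exactly $H-\mathcal{V}$.

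Next I would check that neither remaining loop changes $H-\mathcal{V}$. For an edge $ab$ of $H-\mathcal{V}$, property (P2) gives a color-matching edge $ww'$ of $T$; since $a,b\notin\mathcal{V}$ their colors differ from $\colorf(v)$, so $w,w'\neq v$ and $ww'\in E_{T-\{v\}}$, which defeats the edge-removal condition. For the \textsc{while} loop, take any $a\in V_{H-\mathcal{V}}$ and any edge $ww'\in E_{T-\{v\}}\subseteq E_T$ with $\colorf(a)=\colorf(w)$; property (P3) supplies a neighbor $a'$ of $a$ in $H$ with $\colorf(a')=\colorf(w')$. As $w'\neq v$ we get $a'\notin\mathcal{V}$, and the edge $aa'$ survives the edge loop because its endpoints' colors match the edge $ww'\in E_{T-\{v\}}$. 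Thus $a$ already has the neighbor demanded by the loop guard, so no vertex of $H-\mathcal{V}$ triggers a removal. It follows that \textsc{mcg}$(H,T-\{v\})$ returns $H-\mathcal{V}$.

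The delicate point is the \textsc{while} loop: being iterative and potentially cascading, it does not suffice to examine one candidate vertex in isolation; I must show that \emph{no} vertex of $H-\mathcal{V}$ satisfies the guard at its first evaluation, so that the body never runs. This is exactly what (P3), in combination with the leaf hypothesis and the colorfulness of $T$, provides — and checking that the witnessing neighbor $a'$ both avoids $\mathcal{V}$ and retains its edge to $a$ after the edge loop is the real content; the rest is bookkeeping. A tempting shortcut is to argue instead that $H-\mathcal{V}$ is itself a clean subgraph regarding $T-\{v\}$ (delete the leaf-image from a copy of $T$ inside $H$), but that route presupposes $H$ is clean and is best avoided at this stage.
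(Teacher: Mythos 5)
Your proof is correct, and it follows the same basic route as the paper---simulate \textsc{mcg} on the pair $(H, T-\{v\})$ loop by loop---but it is considerably more complete than what the paper actually writes. The paper's entire proof is one sentence: since $T-\{v\}$ is a colorful motif, lines 2--4 remove exactly the vertices in $\mathcal{V}$. It leaves entirely implicit the part you identify as the real content, namely that the edge loop and the \textsc{while} loop then remove nothing. Your invariants (P1)--(P3), extracted from the termination conditions of the first run \textsc{mcg}$(G,T)$, supply precisely this missing verification: colorfulness forces $w,w'\neq v$ for any matching edge of a surviving edge $ab$, so (P2) defeats the edge-removal test, and (P3) together with the observation that the witnessing neighbor $a'$ avoids $\mathcal{V}$ and keeps its edge to $a$ shows the \textsc{while} guard is never triggered, so no cascade starts. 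Your decision to avoid invoking cleanness or maximality of $H$ is also exactly right: Lemma~\ref{lem:x-subgraph} and the maximality theorem are proved \emph{from} Lemma~\ref{lem:H-V}, so that shortcut would be circular. One small quibble: the leaf hypothesis is needed only to guarantee that $T-\{v\}$ is again a tree (a legitimate input to \textsc{mcg}); the non-cascading argument itself runs on colorfulness and (P1)--(P3) alone, so your closing remark slightly overstates the leaf hypothesis's role there.
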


\begin{proof}
Since $T - v$ is a colorful motif, lines 2--4 of Algorithm~\textsc{mcg}
remove all vertices in $H$ with color $\colorf(v)$, i.e., the vertices
in $\mathcal{V}$.
\end{proof}

\begin{lemma} \label{lem:x-subgraph}
Given a graph $G$ and a colorful tree $T$, let $H$ be a graph obtained
by Algorithm~\textsc{mcg}. Then, any vertex $x$ in $H$ is a vertex of
a subgraph $\widetilde{G}$ in $H$ isomorphic to $T$.
\end{lemma}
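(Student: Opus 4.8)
The plan is to distill from Algorithm~\textsc{mcg} the one structural property that forces the conclusion, and then build an explicit copy of $T$ inside $H$ around the given vertex $x$. The property I want is the termination condition of the \textsc{while} loop (lines~\ref{mcg:line:while1s}--\ref{mcg:line:while1f}): once that loop halts, no vertex of $H$ satisfies its guard. Combined with the fact that lines~2--4 leave only vertices whose color occurs in $T$, this yields what I will call \emph{neighbor-completeness}: for every $v \in V_H$, taking the unique (by colorfulness of $T$) vertex $w \in V_T$ with $\colorf(w) = \colorf(v)$, and for every neighbor $w'$ of $w$ in $T$, there is some $v' \in V_H$ with $vv' \in E_H$ and $\colorf(v') = \colorf(w')$. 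Informally, the local color-neighborhood prescribed by $T$ is always realized in $H$.

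Given $x \in V_H$, let $w \in V_T$ be the unique tree vertex with $\colorf(w) = \colorf(x)$ and root $T$ at $w$. I will construct a color-preserving map $\phi : V_T \to V_H$ by breadth-first search, setting $\phi(w) := x$. Whenever $\phi$ is already defined on a tree vertex $u$, I extend it to each child $u'$ of $u$ by applying neighbor-completeness at $\phi(u)$: since $uu' \in E_T$ and $\colorf(\phi(u)) = \colorf(u)$, the vertex $\phi(u)$ has a neighbor in $H$ of color $\colorf(u')$, which I take as $\phi(u')$. By construction $\colorf(\phi(u)) = \colorf(u)$ for every $u$, and each tree edge $uu'$ is sent to a genuine edge $\phi(u)\phi(u') \in E_H$.

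It then remains to see that $\phi$ is injective, whence an isomorphism onto a subgraph. Here colorfulness does the work essentially for free: distinct vertices of $T$ carry distinct colors, so their $\phi$-images carry distinct colors and are therefore distinct; in particular, when choosing $\phi(u')$ no previously assigned image can recur, since the only tree vertex of color $\colorf(u')$ is $u'$ itself. Letting $\widetilde{G}$ be the subgraph of $H$ with vertex set $\phi(V_T)$ and edge set $\{\phi(u)\phi(u') : uu' \in E_T\}$, injectivity gives $uu' \in E_T$ iff $\phi(u)\phi(u') \in E_{\widetilde{G}}$, so $\phi$ is a (color-)isomorphism $T \cong \widetilde{G}$ with $x \in V_{\widetilde{G}}$, as required.

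The only genuine obstacle is ensuring the greedy choices never clash, which is exactly what the injectivity argument settles; because $T$ is a tree each vertex is reached once and no backtracking is needed, so no further consistency checks arise. If an inductive formulation is preferred, the same ingredients can instead be marshalled through Lemma~\ref{lem:H-V}: deleting a leaf $v$ of $T$ turns $H$ into $H - \mathcal{V} = \textsc{mcg}(H, T-\{v\})$, and one proves by induction on $|V_T|$ the mildly stronger claim that every $x \in V_H$ is the image of its own color class in some copy of $T$, extending a copy of $T - \{v\}$ by a single leaf via neighbor-completeness, with colorfulness again guaranteeing the new vertex is fresh.
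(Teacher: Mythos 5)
Your proof is correct, and it takes a genuinely different route from the paper's. You isolate the termination condition of the \textsc{while} loop of lines~\ref{mcg:line:while1s}--\ref{mcg:line:while1f}, together with the color filtering of lines~2--4, as a single global invariant of the final $H$ (your ``neighbor-completeness''), and then embed $T$ into $H$ directly by a rooted greedy construction started at $x$; injectivity of your map $\phi$ comes for free because $T$ is colorful and $\phi$ preserves colors, and injectivity in turn guarantees that distinct tree edges have distinct images, so $\widetilde{G} \cong T$ with $x \in V_{\widetilde{G}}$. The paper instead argues by induction on $|V_T|$: it picks a leaf $v$ of $T$ with $\colorf(v) \neq \colorf(x)$ (possible since a colorful tree with $|V_T|>1$ has two leaves of distinct colors), invokes Lemma~\ref{lem:H-V} to identify $\textsc{mcg}(H, T-\{v\})$ with $H - \mathcal{V}$, obtains by the induction hypothesis a copy of $T - \{v\}$ through $x$, and re-attaches the missing leaf using exactly the while-loop property you use. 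Both arguments hinge on that one property of the final $H$; what your direct construction buys is a shorter, self-contained proof that dispenses with Lemma~\ref{lem:H-V} entirely (and it in fact yields the slightly sharper statement that $x$ can be taken to play the role of its own color class in the copy of $T$), whereas the paper's inductive formulation --- which you sketch as an alternative in your closing paragraph, essentially reproducing the paper's proof --- meshes with Lemma~\ref{lem:H-V}, which the paper reuses later in the correctness proof of \textsc{All-Colorful}.
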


\begin{proof}
We prove the lemma by induction on $|V_T|$.

Suppose that $|V_T| \leq 1$. If $|V_T| = 0$, after lines 1--4, $H$ has
no vertex and the proof is completed. If $|V_T| = 1$, after lines
1--4, the proof is also completed since each vertex in $H$ is
isomorphic to $T$.

If $|V_T| > 1$ then $T$ has at least two different leaves. Let $x$ be
an arbitrary vertex in $H$. Since $T$ is a colorful motif, it follows
that, for one of these leaves, say $v$, we have $\colorf(x) \not=
\colorf(v)$. Let $w$ be the neighbor of $v$ in $T$. From
Lemma~\ref{lem:H-V}, we have that \textsc{mcg}$(H, T - \{v\}) = H -
\mathcal{V}$, where $\mathcal{V} = \{u \in V_H : \colorf(u) =
\colorf(v)\}$. Since $\colorf(x) \not= \colorf(v)$, then $x \in V_{H -
  \mathcal{V}}$. By the induction hypothesis, $x$ is a vertex of a
subgraph $\widetilde{G}' \cong T - \{v\}$ in $H - \mathcal{V}$. Let
$w'$ be a vertex in $\widetilde{G}'$ such that $\colorf(w') =
\colorf(w)$. Since $w' \in V_H$, $w'$ must have a neighbor $v'\in V_H$
with color $\colorf(v)$, otherwise $w'$ would have been removed in
line~\ref{mcg:line:while1f} of Algorithm~\textsc{mcg}($G,T$). It
follows that $\widetilde{G} = \widetilde{G}' + \{ v'w' \} \cong T$,
$\widetilde{G}$ is a subgraph of $H$, and $x \in V_{\widetilde{G}}$.
\end{proof}

\begin{lemma} \label{lem:xy-subgraph}
Given a vertex-colored graph $G$ and a colorful tree $T$, let $H$ be a
vertex-colored graph obtained by Algorithm~\textsc{mcg}. Then, any
edge $xy$ in $H$ is an edge of a subgraph $\widetilde{G}$ of $H$
isomorphic to $T$.
\end{lemma}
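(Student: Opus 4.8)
The plan is to argue by induction on $|V_T|$, closely following the proof of Lemma~\ref{lem:x-subgraph}, but with extra care in selecting which leaf of $T$ to delete, since here a whole edge $xy$ must be preserved rather than a single vertex.

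First I would handle the base cases. If $|V_T| \leq 1$ then $T$ has no edge, so the edge-removal step (line~\ref{mcg:line:for2f}) deletes every edge of $H$ and the claim holds vacuously. If $|V_T| = 2$ then $T$ is a single edge; since $xy$ survived line~\ref{mcg:line:for2f}, there is an edge $ww'$ of $T$ with $\colorf(x) = \colorf(w)$ and $\colorf(y) = \colorf(w')$, so the subgraph on $\{x, y\}$ with the edge $xy$ is already isomorphic to $T$.

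For the inductive step with $|V_T| > 2$, I would first note that because $xy$ was not removed, there is a unique (by colorfulness of $T$) edge $ww'$ of $T$ with $\colorf(x) = \colorf(w)$ and $\colorf(y) = \colorf(w')$. The main obstacle, and the only genuinely new point compared to Lemma~\ref{lem:x-subgraph}, is to produce a leaf $v$ of $T$ avoiding \emph{both} colors $\colorf(x)$ and $\colorf(y)$, i.e., a leaf $v \notin \{w, w'\}$. Such a leaf exists: were both $w$ and $w'$ leaves, the edge $ww'$ would force $T$ to consist of exactly those two vertices, contradicting $|V_T| > 2$; hence at most one of $w, w'$ is a leaf. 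As every tree on at least two vertices has at least two leaves, some leaf $v$ differs from both $w$ and $w'$, and colorfulness gives $\colorf(v) \neq \colorf(x)$ and $\colorf(v) \neq \colorf(y)$.

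With $v$ in hand the remainder mirrors Lemma~\ref{lem:x-subgraph}. Let $z$ be the neighbor of $v$ in $T$ and $\mathcal{V} = \{u \in V_H : \colorf(u) = \colorf(v)\}$; by Lemma~\ref{lem:H-V} we have $\textsc{mcg}(H, T - \{v\}) = H - \mathcal{V}$. Since $\colorf(x), \colorf(y) \neq \colorf(v)$, both endpoints and hence the edge $xy$ survive in $H - \mathcal{V}$, so the induction hypothesis supplies a subgraph $\widetilde{G}' \cong T - \{v\}$ of $H - \mathcal{V}$ containing $xy$. Taking the vertex $z'$ of $\widetilde{G}'$ with $\colorf(z') = \colorf(z)$, the while-loop condition (line~\ref{mcg:line:while1s}) forces $z'$ to have a neighbor $v' \in V_H$ with $\colorf(v') = \colorf(v)$, where $v' \in \mathcal{V}$ so that $v' \notin V_{\widetilde{G}'}$ and $v' \notin \{x, y\}$. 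Then $\widetilde{G} = \widetilde{G}' + \{v'z'\}$ is a subgraph of $H$, isomorphic to $T$ by extending the isomorphism via $v' \mapsto v$, and it contains the edge $xy$, which completes the induction.
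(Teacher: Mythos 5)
Your proof is correct, but it takes a genuinely different route from the paper's. You extend the inductive argument of Lemma~\ref{lem:x-subgraph} to edges, where the one new ingredient is your leaf-selection step: since the edge $xy$ of $H$ determines (by colorfulness) a unique edge $ww'$ of $T$ with $\colorf(x)=\colorf(w)$ and $\colorf(y)=\colorf(w')$, and at most one of $w,w'$ can be a leaf when $|V_T|>2$, you can always delete a leaf $v\notin\{w,w'\}$, so that $xy$ survives in $H-\mathcal{V}=\textsc{mcg}(H,T-\{v\})$ (Lemma~\ref{lem:H-V}) and the induction hypothesis applies; reattaching a vertex $v'$ of color $\colorf(v)$ via the while-loop condition of line~\ref{mcg:line:while1s} is then exactly as in Lemma~\ref{lem:x-subgraph}. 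The paper instead gives a short non-inductive splicing argument: it invokes Lemma~\ref{lem:x-subgraph} twice to obtain occurrences $\widetilde{G}_x\cong T$ containing $x$ and $\widetilde{G}_y\cong T$ containing $y$, notes that $x$ has a neighbor $y'$ of color $\colorf(y)$ in $\widetilde{G}_x$ and $y$ a neighbor $x'$ of color $\colorf(x)$ in $\widetilde{G}_y$, deletes the edges $xy'$ and $x'y$, and glues the component of $\widetilde{G}_x-\{xy'\}$ containing $x$ to the component of $\widetilde{G}_y-\{x'y\}$ containing $y$ via the edge $xy$; colorfulness of $T$ (left implicit in the paper) guarantees the two components use disjoint color sets, hence are vertex-disjoint, and their union plus $xy$ is isomorphic to $T$. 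The trade-off: the paper's argument is shorter and uses Lemma~\ref{lem:x-subgraph} as a black box, at the cost of the implicit disjointness observation, while your induction is more uniform with the proof of Lemma~\ref{lem:x-subgraph} and makes every step explicit, at the cost of redoing the recursion and the leaf-avoidance case analysis. Both are valid; your base cases ($|V_T|\leq 1$ vacuous after line~\ref{mcg:line:for2f}, $|V_T|=2$ immediate) and the observation that $v'\in\mathcal{V}$ cannot collide with $V_{\widetilde{G}'}$ are handled correctly.
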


\begin{proof}
From Lemma~\ref{lem:x-subgraph}, there exist subgraphs
$\widetilde{G}_x \cong T$ and $\widetilde{G}_y \cong T$ in $H$
containing vertices $x$ and $y$, respectively. Then, there exist
vertices $x'$ and $y'$ with colors $\colorf(x)$ and $\colorf(y)$,
respectively, and $xy'$ is an edge in $\widetilde{G}_x$ and $x'y$ is
an edge in $\widetilde{G}_y$. Let $G'_x$ be the component of
$\widetilde{G}_x - \{xy'\}$ containing vertex $x$, and $G'_y$ be the
component of $\widetilde{G}_y - \{x'y\}$ containing vertex $y$. The
graph $\widetilde{G}$ which is the union of $G'_x$ and $G'_y$ plus the
edge $xy$ is a subgraph of $H$ isomorphic to $T$.
\end{proof}

\begin{lemma}\label{lem:z-subgraph}
  At the start of each iteration of the loop of
  lines~\ref{mcg:line:while1s}--\ref{mcg:line:while1f} in
  Algorithm~\textsc{mcg}, all vertices in~$V_G \setminus V_H$ and all
  edges in~$E_G \setminus E_H$ do not belong to any subgraph of $G$
  isomorphic to $T$.
\end{lemma}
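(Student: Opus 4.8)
The plan is to prove the statement as a loop invariant, by induction on the number of completed iterations of the loop in lines~\ref{mcg:line:while1s}--\ref{mcg:line:while1f}. Let $P$ denote the assertion that no vertex in $V_G \setminus V_H$ and no edge in $E_G \setminus E_H$ belongs to any subgraph of $G$ isomorphic to $T$. First I would establish $P$ at the moment the loop is entered (the base case), and then show that a single execution of line~\ref{mcg:line:while1f} preserves $P$ (the inductive step).

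For the base case, the only deletions preceding the loop occur at line~\ref{mcg:line:for1f} and line~\ref{mcg:line:for2f}. A vertex removed at line~\ref{mcg:line:for1f} has a color that does not occur in $T$; since any color-isomorphic copy of $T$ realizes exactly the colors of the colorful tree $T$, such a vertex cannot lie in any such copy. An edge removed at line~\ref{mcg:line:for2f} has a pair of endpoint colors matching no edge of $T$; but under a color-isomorphism each edge of a copy of $T$ maps to an edge of $T$ with the same pair of endpoint colors, so this edge cannot lie in any copy either. Hence $P$ holds before the first iteration.

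For the inductive step, assume $P$ holds at the start of an iteration and let $v$ be the vertex deleted at line~\ref{mcg:line:while1f}, witnessed by vertices $w, w' \in V_T$ with $\colorf(v) = \colorf(w)$, $ww' \in E_T$, and no $v' \in V_H$ satisfying both $vv' \in E_H$ and $\colorf(v') = \colorf(w')$. Suppose, for contradiction, that $v$ lies in some subgraph $\widetilde{G} \cong T$ of $G$, via a color-isomorphism $f$. Colorfulness of $T$ forces $f(w) = v$, so that $v' := f(w')$ satisfies $\colorf(v') = \colorf(w')$ and $vv' \in E_{\widetilde{G}}$. The key point is that both $v'$ and the edge $vv'$ belong to a copy of $T$, so by the induction hypothesis they cannot have been removed before this iteration, that is, $v' \in V_H$ and $vv' \in E_H$. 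This contradicts the deletion criterion for $v$. Therefore no such $\widetilde{G}$ exists, and since the iteration removes only the vertex $v$ and no edge, $P$ is preserved.

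The step I expect to be the main obstacle is the inductive step, precisely because the loop test at line~\ref{mcg:line:while1s} is phrased relative to the current graph $H$ rather than to $G$. A priori, a neighbor $v'$ of the required color could exist in $G$ while being absent from $H$, which would block the contradiction. What rescues the argument is the invariant itself: any $v'$ lying in a copy of $T$ must still survive in $H$ by the induction hypothesis. Making this reliance on $P$ explicit, and thereby transporting the $H$-relative test back to the full graph $G$, is the crux of the proof.
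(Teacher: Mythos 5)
Your argument follows the same route as the paper's proof: a loop invariant established by induction on iterations, with the base case discharged by noting that lines~\ref{mcg:line:for1f} and~\ref{mcg:line:for2f} remove only color-incompatible vertices and edges, and the inductive step using the invariant itself to certify that the witness survives in $H$. Indeed, your treatment of the crux is \emph{more} explicit than the paper's: where you argue by contradiction that colorfulness forces $f(w)=v$, so $v'=f(w')$ and the edge $vv'$ lie in a copy of $T$ and hence, by the induction hypothesis, are still present in $H$, the paper compresses this into the single sentence asserting that, by hypothesis, vertex $v$ and its incident edges belong to no subgraph of $G$ isomorphic to $T$. Your remark that the $H$-relative loop test must be transported back to $G$ via the invariant correctly identifies why the induction hypothesis is indispensable here.

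One slip needs fixing: your closing claim that ``the iteration removes only the vertex $v$ and no edge'' is false. Deleting $v$ from $H$ also deletes every edge of $E_H$ incident to $v$, and since the invariant explicitly concerns $E_G \setminus E_H$, these newly removed edges must be certified as well. The repair is immediate from what you already proved: any subgraph of $G$ isomorphic to $T$ that contained an edge incident to $v$ would contain $v$ itself, which you have just shown impossible; hence the deleted incident edges lie in no copy of $T$ either. The paper covers this point explicitly by speaking of ``vertex $v$ and edges incident to it'' as precisely the items removed. With that one-sentence patch your proof is complete and coincides with the paper's.
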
 

\begin{proof}
  Since, in any subgraph of $G$ isomorphic to $T$, the color of a
  vertex must be a color of a vertex in $T$ and the colors of
  endpoints of an edge must be the colors of endpoints of an edge in 
  $T$, we have that steps~\ref{mcg:line:for1f}
  and~\ref{mcg:line:for2f} remove only vertices and edges that do not
  belong to any subgraph of $G$ isomorphic to $T$. Thus, the statement
  is true prior to the first iteration of the loop of
  lines~\ref{mcg:line:while1s}--\ref{mcg:line:while1f}.

  Suppose that the statement is true before an iteration of the loop
  of lines~\ref{mcg:line:while1s}--\ref{mcg:line:while1f}.  Let $v$ in
  $V_H$ and $w, w'$ in $V_T$ be vertices such that $\colorf(v) =
  \colorf(w)$, $ww' \in E_T$, and there does not exist $v'$ in $V_H$
  such that $vv' \in E_H$ and $\colorf(v') = \colorf(w')$. Since, by
  hypothesis, all vertices in~$V_G \setminus V_H$ and all edges
  in~$E_G \setminus E_H$ do not belong to any subgraph of $G$
  isomorphic to $T$, we have that vertex $v$ and edges incident to it
  do not belong to any subgraph of $G$ isomorphic to $T$ either. Since
  they are precisely the vertices and edges removed in $H$, the
  statement is still true before the next iteration.
\end{proof}

\begin{theorem}
  Given a vertex-colored graph $G$ and a colorful tree $T$, let $H$ be
  a graph given by Algorithm~\textsc{mcg}. Then, $H$ is the maximum
  clean subgraph of $G$ regarding~$T$.
\end{theorem}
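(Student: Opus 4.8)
The plan is to read the theorem off directly from the three preceding lemmas, since the definition of a maximum clean subgraph of $G$ regarding $T$ splits into exactly two conditions: that $H$ is a clean subgraph regarding $T$, and that every subgraph of $G$ isomorphic to $T$ is contained in $H$. I expect the proof to be short, as all the substantive work has already been carried out in Lemmas~\ref{lem:x-subgraph}, \ref{lem:xy-subgraph}, and~\ref{lem:z-subgraph}.

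First I would establish that $H$ is clean regarding $T$. By definition this requires that every vertex and every edge of $H$ lie in some subgraph of $H$ isomorphic to $T$. Lemma~\ref{lem:x-subgraph} provides precisely the vertex part and Lemma~\ref{lem:xy-subgraph} the edge part, so the two lemmas together yield the cleanliness of $H$ with no further argument.

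Next I would prove maximality, namely that any subgraph $S \subseteq G$ with $S \cong T$ satisfies $S \subseteq H$. Here I would treat Lemma~\ref{lem:z-subgraph} as a loop invariant for the while loop of lines~\ref{mcg:line:while1s}--\ref{mcg:line:while1f}. Since each execution of the loop body deletes a vertex and $V_H \subseteq V_G$ is finite, the loop terminates; and because the invariant holds at the start of every iteration, it still holds once the loop exits. Thus, upon termination, no vertex of $V_G \setminus V_H$ and no edge of $E_G \setminus E_H$ belongs to any subgraph of $G$ isomorphic to $T$. Contrapositively, every vertex and edge used by $S$ must survive in $H$, so $S$ is a subgraph of $H$, which is the desired maximality.

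The only point demanding a little care --- and the nearest thing to an obstacle --- is the transition of Lemma~\ref{lem:z-subgraph} from an ``at the start of each iteration'' statement to one that holds at termination; this is handled by the standard observation that the loop terminates (the vertex count strictly decreases) and that a loop invariant valid at the start of every iteration is in particular valid when the exit condition is first met. With that remark in place, the two conditions combine to show that $H$ is the maximum clean subgraph of $G$ regarding $T$.
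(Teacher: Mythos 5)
Your proposal is correct and follows essentially the same route as the paper's proof: cleanliness of $H$ from Lemmas~\ref{lem:x-subgraph} and~\ref{lem:xy-subgraph}, and maximality from Lemma~\ref{lem:z-subgraph} applied to the removed vertices and edges. Your explicit remark that the loop invariant of Lemma~\ref{lem:z-subgraph} persists at termination (since the vertex count strictly decreases) is a small courtesy the paper leaves implicit, but it does not change the argument.
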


\begin{proof}
  From Lemmas~\ref{lem:x-subgraph} and~\ref{lem:xy-subgraph}, since
  all vertices and edges belong to a subgraph of $G$ isomorphic to
  $T$, we have that $H$ is a clean subgraph. On the other hand, as a
  consequence of Lemma~\ref{lem:z-subgraph}, all vertices and edges
  removed by Algorithm~\textsc{mcg} are not in any subgraph of $T$ in
  $G$, and thus $H$ is the maximum clean subgraph of $G$ regarding
  $T$.
\end{proof}

We show now the running time of Algorithm~\textsc{mcg}.

\begin{theorem}
  Given a vertex-colored graph $G$ and a colorful tree $T$,
  Algorithm~\textsc{mcg} runs in time cubic in the number of vertices
  of $G$.
\end{theorem}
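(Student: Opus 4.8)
The plan is to charge the running time of Algorithm~\textsc{mcg} to its three phases separately and to show that the final \texttt{while} loop dominates, contributing the cubic term. Write $n := |V_G|$ and $t := |V_T|$. Since a subgraph of $G$ has at most $n$ vertices, we may assume $t \le n$ (otherwise no subgraph of $G$ is isomorphic to $T$ and Algorithm~\textsc{mcg} returns the empty graph after its first phase). As $T$ is colorful it has $|E_T| = t - 1$ edges, and of course $|E_G| = O(n^2)$.

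First I would do linear-time preprocessing on $T$: build the set of colors occurring in $T$, a table sending each such color $\gamma$ to the unique vertex $w \in V_T$ with $\colorf(w) = \gamma$ together with the set of its neighbor colors $\{\colorf(w') : ww' \in E_T\}$, and the set of unordered color pairs induced by the edges of $T$. All of this takes $O(t)$ time and turns every color-membership and edge-color-pair query into an $O(1)$ lookup; note that colorfulness of $T$ is exactly what makes the vertex-of-color-$\gamma$ table well defined and these lookups unique.

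With this in hand the two \texttt{for} loops are easy. The loop of lines~2--4 performs a single pass over $V_H$ doing an $O(1)$ membership test per vertex, hence $O(n)$ time. The loop of lines~5--7 performs a single pass over $E_H$, and for each edge $vv'$ it suffices to check whether the unordered pair $\{\colorf(v),\colorf(v')\}$ is one of the precomputed edge-color pairs of $T$; this is $O(1)$ per edge and $O(|E_G|) = O(n^2)$ in total.

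The main obstacle is the \texttt{while} loop of lines~8--9, which I would analyze as follows. Each execution of the loop body deletes exactly one vertex of $H$, so the loop runs at most $n$ times. To locate a violating vertex in one pass I would scan every surviving $v \in V_H$: setting $\gamma := \colorf(v)$ and letting $w$ be the $T$-vertex of color $\gamma$, I collect the colors of the current neighbors of $v$ in $O(\deg_H(v))$ time and then test, against the stored neighbor-color set of $w$, whether some required color $\colorf(w')$ is missing, at cost $O(\deg_T(w)) = O(t)$. Summing over all vertices, one such scan costs $O\bigl(\sum_v \deg_H(v) + nt\bigr) = O(|E_G| + nt) = O(n^2)$. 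Since there are at most $n$ iterations, the \texttt{while} loop takes $O(n^3)$, which dominates the $O(n^2)$ cost of the earlier phases and establishes the claimed cubic running time. I would finally remark that an incremental variant, which after each deletion re-tests only the neighbors of the removed vertex rather than rescanning all of $V_H$, lowers the practical cost but is not needed for the stated bound.
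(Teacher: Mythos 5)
Your proposal is correct and follows essentially the same route as the paper's proof: a phase-by-phase accounting in which the initial \texttt{for} loops cost at most $O({V_G}^2)$ and the \texttt{while} loop dominates, running at most $|V_G|$ iterations (each deleting one vertex) at $O({V_G}^2)$ per iteration, for a total of $O({V_G}^3)$. Two cosmetic slips do not affect the bound: $|E_T| = t-1$ follows from $T$ being a tree, not from colorfulness, and when $t > |V_G|$ the graph is emptied by the cascading \texttt{while} loop rather than immediately after the first phase (the benign assumption $t \le |V_G|$ is one the paper's proof also makes, implicitly).
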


\begin{proof}
  Consider that $G$ and $T$ are given by their adjacency matrices and
  the set of colors is given by an ordered array.

  Line~1 of Algorithm~\textsc{mcg} can be implemented in $O({V_G}^2)$
  time since $H$ is a copy of $G$. Notice that we adopt the notation
  of~\cite{CLRS09} where inside asymptotic notation, the symbol $V$
  denotes $|V|$ and the symbol $E$ denotes $|E|$. The color of each
  vertex in $H$ can be checked in $O(1)$ time, and each vertex can be
  removed in $O(V_G)$ time, which implies that lines~2--4 can be
  performed in $|V_G| \cdot (O(1) + O(V_G)) = O({V_G}^2)$ time. We can
  verify whether colors of the extremities of an edge in $H$ are also
  the colors of the extremities of an edge in $T$ in $O(1)$ time, and,
  if necessary, its deletion can also be done in $O(1)$ time which
  implies, since we have $O({V_G}^2)$ edges, that Step~3 can be
  performed in $O({V_G}^2) \cdot (O(1) + O(1)) = O({V_G}^2)$ time. In
  Step~4, deciding whether a vertex should be removed or not, can be
  done spending $O({V_G}^2)$ time, and the removal can be done in
  $O(V_G)$ time, when necessary, implying that the total time spent on
  Step~4 is $|V_G| \cdot (O({V_G}^2) + O(V_G)) =
  O({V_G}^3)$. Therefore, the running time of Algorithm~\textsc{mcg}
  is $O({V_G}^3)$.
\end{proof}

\subsection{\label{sec:allo} Finding all occurrences of a colorful motif in a clean subgraph}

We present now Algorithm~\textsc{All-Colorful} for finding all the
colorful motifs in a given maximum clean subgraph. Despite this task
being superpolynomial, sometimes it is useful to find all motifs in
order to choose those we are interested in. The input of the algorithm
is the maximum clean subgraph $H$, as defined in the previous
section. The colorful tree $T$ is implicit.  If $V_H = \emptyset$,
then there does not exist any subgraph of $H$ isomorphic to $T$
(except if $V_T = \emptyset$). Thus, we assume that $H$ has at least
one vertex.

\begin{algorithm}
\caption{\label{alg:all-colorful} \textsc{All-Colorful}$(H)$}
\begin{algorithmic}[1] 

  \footnotesize
  
  \REQUIRE{maximum clean subgraph $H$}
  \ENSURE{set $\mathcal{G}$ of all subgraphs of $H$ isomorphic to a
    colorful motif} 

  \smallskip

  \IF{$|\{ \colorf(a) : a \in V_H \}| = 1$}
  
    \STATE $\mathcal{G} := \{(a,\emptyset) : a \in V_H\}$
  
  \ELSE
    
    \STATE let $\alpha$ be a color of a vertex of $H$ whose neighbors
    have all the same color $\beta$
    \STATE $\mathcal{A} := \{a \in V_H : \colorf(a) =
    \alpha\}$ 
      
    \STATE $\mathcal{G} := \emptyset$
      
    \STATE $\mathcal{H} := \textsc{All-Colorful}(H - \mathcal{A})$
      
    \FOR{each vertex $a$ in $\mathcal{A}$}
      
      \STATE $\mathcal{B} := \{b \in V_H : \colorf(b) = \beta$
      and $ab \in E_H\}$
        
      \FOR{each $\widetilde{G}$ in $\mathcal{H}$ and $b' \in
        \mathcal{B} \cap V_{\widetilde{G}}$}
        
        \STATE $\mathcal{G} := \mathcal{G} \cup \{\widetilde{G} +
        \{ab'\}\}$ 
          
      \ENDFOR
          
    \ENDFOR
        
  \ENDIF
    
  \RETURN $\mathcal{G}$
  
\end{algorithmic}
\end{algorithm}

Initially, we can show the following easy result.

\begin{lemma}\label{lema:aux}
Let $T$ be a colorful tree such that $|V_T| = 1$. Then, for each $v
\in V_H$, we have $\{v\} \in \mathcal{G}$, $\{v\}$ is a subgraph of
$H$ and $\{v\} \cong T$.
\end{lemma}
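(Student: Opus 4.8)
The plan is to run Algorithm~\textsc{All-Colorful} on the input $H$ under the hypothesis $|V_T| = 1$ and to read the three claims directly off its behaviour, so that neither induction nor case analysis is needed.

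First I would pin down the structure of $H$. Let $\gamma$ be the color of the unique vertex of $T$. By Lemma~\ref{lem:x-subgraph}, every $v \in V_H$ is a vertex of some subgraph of $H$ isomorphic to $T$; since $|V_T| = 1$, that subgraph can only be the single vertex $v$ itself, which forces $\colorf(v) = \gamma$. Hence all vertices of $H$ carry the color $\gamma$, so $|\{\colorf(a) : a \in V_H\}| = 1$.

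Consequently the guard in line~1 of the algorithm holds, and line~2 assigns $\mathcal{G} := \{(a,\emptyset) : a \in V_H\}$, so that \textsc{All-Colorful}$(H)$ returns exactly the family of one-vertex graphs indexed by $V_H$. The three assertions then follow at once. For any $v \in V_H$: the graph $\{v\} = (v,\emptyset)$ belongs to $\mathcal{G}$ by the assignment in line~2; it is a subgraph of $H$ because $v \in V_H$ and $\emptyset \subseteq E_H$; and the identity map on the single vertex is a color-isomorphism onto $T$, since both graphs consist of one vertex of color $\gamma$, giving $\{v\} \cong T$.

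The argument is entirely routine, and I expect no genuine obstacle; the only step deserving attention is confirming that the condition in line~1 is actually triggered, which is precisely the observation (via Lemma~\ref{lem:x-subgraph}) that cleanness of $H$ with respect to a single-vertex motif forces every vertex of $H$ to share the color $\gamma$.
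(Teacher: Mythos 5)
Your proposal is correct and takes essentially the same route as the paper, whose entire proof is the one-liner ``Since $H$ is a clean subgraph regarding $T$, the lemma holds immediately'': the cleanness of $H$ (which you correctly extract via Lemma~\ref{lem:x-subgraph}) forces every vertex of $H$ to carry the color of the unique vertex of $T$, so lines~1--2 of \textsc{All-Colorful} put each single-vertex graph into $\mathcal{G}$. You have simply made explicit the details the paper leaves implicit, including the check that the guard in line~1 fires.
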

\begin{proof}
  Since $H$ is a clean subgraph regarding $T$, the lemma holds
  immediately.
\end{proof}

The following result shows that \textsc{All-Colorful}$(H)$ returns the
set of all subgraphs of $H$ isomorphic to $T$.

\begin{theorem}
Let $G$ be a vertex-colored graph and $T$ be a colorful tree. Suppose
that $H$ is the maximum clean subgraph returned by \textsc{mcg}$(G,
T)$, $\mathcal{G}$ is the set of graphs returned by
\textsc{All-Colorful}$(H)$, and $\widetilde{G}'$ is a subgraph of
$H$. Then, $\widetilde{G}' \in \mathcal{G}$ if and only if
$\widetilde{G}' \cong T$.
\end{theorem}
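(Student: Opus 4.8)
The plan is to proceed by induction on $|V_T|$, mirroring the recursive structure of \textsc{All-Colorful}. For the base case $|V_T| = 1$, the maximum clean subgraph $H$ has all vertices of a single color, so the algorithm enters its first branch; Lemma~\ref{lema:aux} immediately gives that $\mathcal{G}$ is exactly the set of single-vertex subgraphs of $H$, and every subgraph of $H$ isomorphic to $T$ is precisely one such vertex, settling both directions at once.

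For the inductive step ($|V_T| > 1$), the first thing I would establish is that the color $\alpha$ selected in line~4 is well defined and corresponds to a leaf of $T$. Since $T$ is colorful it has a leaf $v$ with (unique) color $\alpha$ and neighbor $w$ of color $\beta$; because $H$ is clean and nonempty it contains a vertex of color $\alpha$, and because the edge-cleaning steps of \textsc{mcg} delete every edge whose endpoint colors do not match an edge of $T$, no vertex of color $\alpha$ in $H$ can retain a neighbor of color other than $\beta$. Hence every vertex of color $\alpha$ has all its neighbors colored $\beta$, so line~4 picks exactly this leaf color. With this in hand, I would invoke Lemma~\ref{lem:H-V} to identify $H - \mathcal{A}$ (where $\mathcal{A}$ is the set of color-$\alpha$ vertices) as the maximum clean subgraph of $H$ with respect to the smaller tree $T - \{v\}$. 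The induction hypothesis then applies to the recursive call, giving that $\mathcal{H}$ is precisely the set of subgraphs of $H - \mathcal{A}$ isomorphic to $T - \{v\}$.

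Both directions then follow by analyzing the single leaf-reattachment step. For the ``if'' direction, take a subgraph $\widetilde{G}' \cong T$ of $H$: by colorfulness it has a unique vertex $a$ of color $\alpha$, which is a leaf adjacent to a unique vertex $b'$ of color $\beta$; then $\widetilde{G}' - \{a\} \cong T - \{v\}$, and since every vertex of $\widetilde{G}' - \{a\}$ has color $\neq \alpha$ it lies in $H - \mathcal{A}$ and hence, by induction, in $\mathcal{H}$. Because $a \in \mathcal{A}$ and $b' \in \mathcal{B} \cap V_{\widetilde{G}' - \{a\}}$, the algorithm produces $(\widetilde{G}' - \{a\}) + \{ab'\} = \widetilde{G}'$, so $\widetilde{G}' \in \mathcal{G}$. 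For the ``only if'' direction, any member of $\mathcal{G}$ has the form $\widetilde{G} + \{ab'\}$ with $\widetilde{G} \cong T - \{v\}$, $a$ of color $\alpha$ not in $V_{\widetilde{G}}$, and $b'$ of color $\beta$ in $\widetilde{G}$; since $ab' \in E_H$ and $\widetilde{G} \subseteq H - \mathcal{A} \subseteq H$, this is a subgraph of $H$, and attaching $a$ to $b'$ reproduces $T$.

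The hard part will be the bookkeeping in the ``only if'' direction: one must verify that reattaching $a$ yields a graph isomorphic to $T$ rather than to some other tree. This is where colorfulness is essential---under the isomorphism $\widetilde{G} \cong T - \{v\}$ the vertex $b'$, being the only color-$\beta$ vertex, must map to $w$, which is precisely the attachment point of the leaf $v$ in $T$; adding the color-$\alpha$ vertex $a$ adjacent to $b'$ therefore realizes exactly the edge $vw$ and reconstructs $T$. Establishing that the choice in line~4 is always available (so the recursion is well posed) and that Lemma~\ref{lem:H-V} licenses the application of the induction hypothesis are the two structural facts that make the leaf-reattachment argument go through.
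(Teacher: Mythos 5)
Your proposal is correct and takes essentially the same route as the paper's proof: induction on $|V_T|$ with Lemma~\ref{lema:aux} handling the base case, Lemma~\ref{lem:H-V} identifying $H - \mathcal{A}$ as the maximum clean subgraph regarding the tree minus its leaf (so the induction hypothesis applies to the recursive call), and a leaf-reattachment analysis of line~11 in both directions. One small point to tighten: your claim that line~4 ``picks exactly this leaf color'' only shows that a leaf color is an \emph{available} choice, whereas you also need that no non-leaf color can satisfy line~4's condition --- which follows, as in the paper's inline appeal to cleanness and maximality, because every vertex of $H$ lies in a subgraph isomorphic to $T$, so a vertex of $H$ whose color is that of an internal vertex of $T$ must have neighbors of at least two distinct colors.
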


\begin{proof}
  We prove the theorem by
  induction on the number of vertices in $T$. If $|V_T| = 1$, then,
  since \textsc{mcg}($G, T$) returns $H$, all vertices in $H$ have the
  same color. Thus, from lines~1--2 of the Algorithm~\textsc{All-Colorful},
  we have that $a \in \mathcal{G}$ for each $a
  \in V_H$. On the other hand, from Lemma~\ref{lema:aux}, $\{a\} \cong
  T$ for each $a \in V_H$. Therefore the theorem holds if $|V_T| =
  1$.

  Suppose that $|V_T| > 1$ and $\widetilde{G}' \in \mathcal{G}$. Then,
  $\mathcal{A} \neq \emptyset$ and $\mathcal{B} \neq \emptyset$. Since
  $T$ is a tree and $|V_T| > 1$, $T$ has at least a vertex $v$ of
  degree one and since $H$ is a clean subgraph there exists a vertex
  $v'$ in $V_H$ such that $\colorf(v')=\colorf(v)$, i.e., $\mathcal{A}
  \neq \emptyset$. Likewise, each vertex $w$ in $V_T$ adjacent to $v$
  implies that $c(w)=\beta$, for some color $\beta$. Hence, as $H$ is
  a clean subgraph, then $\mathcal{B} \neq \emptyset$. It follows
  that, by line~11, $\widetilde{G}' = \widetilde{G} + \{ab'\}$, where
  $b' \in \mathcal{B}$ is a vertex in $\widetilde{G}$ in $\mathcal{H}$
  and $a \in \mathcal{A}$. Let $u, u'$ be vertices in $V_T$ such that
  $\colorf(u) = \colorf(a)$ and $\colorf(u')= \colorf(b')$. Since $H$
  is a clean subgraph regarding $T$, it follows that $u$ and $u'$ are
  neighbors in $T$. Further, since $H$ is the maximum clean subgraph
  regarding $T$ and every neighbor of a vertex with color $\colorf(u)$
  has color $\colorf(u')$, then $u$ is a leaf in $T$. From
  Lemma~\ref{lem:H-V}, $H - \mathcal{A}$ is the maximum clean subgraph
  regarding $T - \{u\}$. By the induction step, $\widetilde{G}$ is a
  subgraph of $H - \mathcal{A}$ isomorphic to $T - \{u\}$. Thus,
  $\widetilde{G}' = \widetilde{G} + \{ab'\}$ is isomorphic to $T = T -
  \{u\} + \{uu'\}$. Consequently, if $\widetilde{G}' \in \mathcal{G}$,
  then $\widetilde{G}'$ is isomorphic to $T$, for $|V_T| > 1$.

  Conversely, suppose that $\widetilde{G}'$ is a subgraph of $H$
  isomorphic to $T$. Let $a$ and $b'$ be vertices in
  $\widetilde{G}'$. By the inductive hypothesis,
  $\textsc{All-Colorful}(H - \mathcal{A})$ returns $\widetilde{G} =
  \widetilde{G}' - \{a\}$. Then, by line~11 we have $\widetilde{G}' =
  \widetilde{G}' + \{ab'\} \in \mathcal{G}$.
\end{proof}

For a vertex-colored graph $G$ and a colorful motif $T$, we denote the
\emph{number of occurrences of~$T$ in $G$} by $\eta(G,T)$. Clearly, an
algorithm for enumerating motifs has running time $\Omega(\eta(G, T))$
and, since $\eta(G, T)$ can be superpolynomial, an algorithm for
enumerating motifs has exponential running time in the worst case. On
the other hand, considering all recursive calls, the time consumed
by~\textsc{All-Colorful} is dominated by Step~11 and, since each edge
can be added to at most $\eta(G, T)$ subgraphs, the total running time
of the algorithm is $O(E_G \cdot \eta(G, T))$, which implies that the
exponential part lies only in the output size. With this observation,
we have the following immediate result.

\begin{corollary}
  If $\eta(G, T)$ is polynomial in the size of $G$, then
  Algorithm~\textsc{All-Colorful} also runs in polynomial time in the
  size of $G$.
\end{corollary}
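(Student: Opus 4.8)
The plan is to derive the corollary directly from the running-time bound established for Algorithm~\textsc{All-Colorful} in the paragraph immediately preceding the statement, namely that its total cost over all recursive calls is $O(E_G \cdot \eta(G,T))$. First I would recall why that bound holds: across the recursion the dominant work is Step~11, where an edge $ab'$ extends a subgraph $\widetilde{G} \in \mathcal{H}$ isomorphic to $T - \{u\}$ into a subgraph $\widetilde{G} + \{ab'\} \in \mathcal{G}$ isomorphic to $T$. Each such extension yields a distinct output subgraph, and a fixed edge of $G$ can appear in at most $\eta(G,T)$ of the produced subgraphs; hence the number of executions of Step~11, summed over all recursive calls, is $O(E_G \cdot \eta(G,T))$, so the only superpolynomial factor is the output size $\eta(G,T)$ itself.

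Next I would bound the remaining, non-Step-11 overhead. Each recursive call strips one color class $\mathcal{A}$ from $H$, so the recursion depth is at most $|\colorset| \le |V_G|$, and at each level the bookkeeping that chooses $\alpha$ and $\beta$ and builds $\mathcal{A}$ and $\mathcal{B}$ costs at most a polynomial in $|V_G|$. Thus the work not attributable to Step~11 is polynomial in the size of $G$ and cannot dominate.

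Finally I would combine the two estimates. Since $E_G \le \binom{|V_G|}{2}$ is polynomial in the size of $G$, and by hypothesis $\eta(G,T)$ is polynomial in the size of $G$, the product $E_G \cdot \eta(G,T)$ is a product of two polynomials in $|V_G|$ and is therefore itself polynomial in the size of $G$. Combined with the polynomial non-Step-11 overhead, this shows that \textsc{All-Colorful} runs in polynomial time whenever $\eta(G,T)$ is polynomial, as claimed.

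As for the main difficulty, there is no deep obstacle here, since the genuinely delicate counting argument was already carried out when establishing the $O(E_G \cdot \eta(G,T))$ bound. The only point that warrants a careful check is precisely that the recursion's structural overhead does not secretly introduce a factor that is superpolynomial independently of $\eta(G,T)$; once that is confirmed, the corollary reduces to the observation that a product of polynomials is a polynomial.
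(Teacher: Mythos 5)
Your proposal is correct and follows essentially the same route as the paper: the corollary is read off from the $O(E_G \cdot \eta(G,T))$ bound on Step~11 stated just before it, combined with the observation that a product of polynomials is polynomial. The only addition is that you explicitly verify the non-Step-11 overhead (recursion depth at most $|\colorset| \le |V_G|$, polynomial bookkeeping per level) is polynomial, which the paper leaves implicit in the phrase that the running time ``is dominated by Step~11.''
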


\subsection{\label{sec:numb} Finding the number of occurrences of a colorful motif in a graph}

Let $G$ be a graph and $T$ be a colorful tree. Notice that $\eta(G,T)$
can be easily obtained by finding all occurrences of $T$ in $G$. As
seen in the previous section, this procedure runs in exponential time
in the worst case, since the number of occurrences of $T$ in $G$ may
be exponential. Thus, we describe now a linear time algorithm for
computing $\eta(G, T)$ without enumerating all occurrences.

Choose an arbitrary vertex $r$ in $T$ and, from now on, deal with $T$
as a rooted tree with root $r$. Let $u$ and $v$ be vertices in $T$. If
$u$ belongs to a path from $r$ to $v$, we say that $v$ is a
\emph{descendant} of $u$ in $T$.  We define $T_u$ as the rooted
subtree of $T$ with root $u$ plus all descendants of $u$.  We also
define the \emph{direct descendants of $u$} as the set of vertices
$A_u := \{ v \in T_u: uv \in E_T \}$. Moreover, $L(i)$ denotes a list
of vertices of $G$ with color $i$.  Now, for the next definitions, we
consider a vertex $u$ in $T$ and a vertex $z$ in $G$ such that
$\colorf(u) = \colorf(z)$. We then define a set of vertices
$\text{adj}_z := \{y \in V_G : yz \in E_G \text{ and } \colorf(y) =
\colorf(w) \text{ for some vertex $w \in A_u$} \}$, and an integer
$\text{N}_z := |\{ S : S \text{ is an occurrence of $T_u$ in $G$ and
  $z \in S$} \}|$.

Thus, considering $u \in V_T$ and $z \in V_G$ such that $\colorf(u) =
\colorf(z)$, we have

\begin{align}
  \text{N}_z &= \left\{
  \begin{array}{ll}
    1\:, & \text{if $u$ is a leaf in $T$}\:,\\
    \displaystyle \prod_{v \in A_u} 
    \bigg( 
    \sum_{y \in \text{adj}_z : \colorf(y) = \colorf(v)} \!\!\!\! \text{N}_y 
    \bigg)
    \:, & \text{otherwise}\:.
  \end{array} 
  \right. \label{exp:numberOfOcurrences}
\end{align}

Let $v_1 (= r), v_2, \ldots, v_{|V_T|}$ be a list of visited vertices
of $T$ in a breadth-first search with an arbitrary source $r$. Assume
that $\colorf(v_i) = i$. We use the reverse order of this list to
compute $\text{N}_z$, according to~(\ref{exp:numberOfOcurrences}),
for each vertex $z$ in $L(i)$ and color $i$. After $\text{N}_z$ has
been computed for each $z$ in $V_G$, we return $\sum_{z \in L(1)}
\text{N}_z$. See Fig~\ref{fig-numberOfMotifs}.

\begin{figure}[h!]%
  \centering%
  \includegraphics{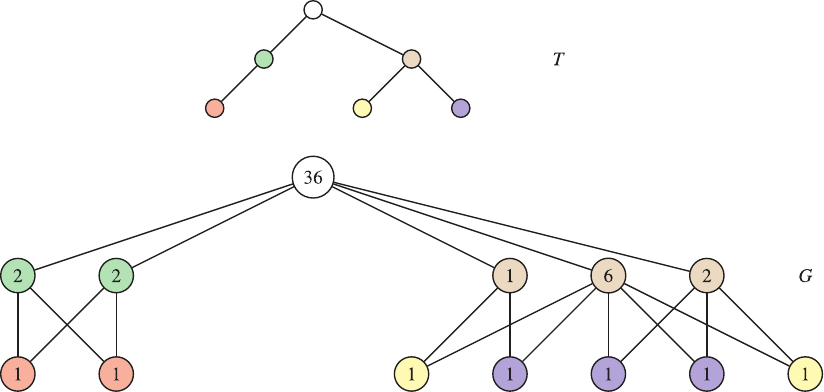}
  \caption{\label{fig-numberOfMotifs} A colorful tree $T$ rooted at
    the white vertex. In $G$, each vertex $z$ is labeled with
    $\textrm{N}_z$. Notice that $\textrm{N}_z$ represents the number
    of occurrences of $T_{u}$ in $G$, where $\colorf(u) =
    \colorf(z)$.}
\end{figure}
 
The procedure to compute $\eta(G, T)$ is summarized in
Algorithm~$\eta$.

\begin{algorithm}[!h]
\caption{\label{alg:numocc} $\eta(G, T)$}
\begin{algorithmic}[1] 

  \footnotesize
  
  \REQUIRE{graph $G$, colorful tree $T$}

  \ENSURE{number of occurrences of $T$ in $G$}

  \smallskip

  \STATE let $r$ be an arbitrary vertex in $T$

  \STATE let $v_1, v_2, \ldots, v_{|V_T|}$ be a list of visited
  vertices of $T$ in a breadth-first traversal with source
  $r$ \label{alg-nOfOccur:st1}

  \STATE let $\colorf(v_i) := i$ 

  \FOR{each color $i = |V_T|$ \textbf{downto} $1$}
    \FOR{each $z$ in $L(i)$}
      \IF{$v_i$ is a leaf in $T$}
        \STATE $\textnormal{N}_z = 1$
      \ELSE
        \STATE $\displaystyle \textnormal{N}_z = \prod_{v \in A_i}
        \bigg( \sum_{y \in \text{adj}_z : \colorf(y) = \colorf(v)}
        \!\!\! \textnormal{N}_y \bigg)$  
      \ENDIF
    \ENDFOR
  \ENDFOR

  \RETURN $\sum_{z \in L(1)} \textnormal{N}_z$

\end{algorithmic}

\end{algorithm}

Since vertices in $V_G$ are processed according to their colors given
by Step~\ref{alg-nOfOccur:st1}, the values $\text{N}_y$, for all $y$
in $\text{adj}_z$, are previously computed before the computation of
$\text{N}_z$. This implies, using sum and product rules, that
$\text{N}_z$ is correctly computed.

The last step of Algorithm~$\eta(G,T)$ takes into account that each
occurrence of $T$ in $G$ has a vertex of color $1$, there are
$|L(1)|$ vertices in $G$ with color $1$, and, for each vertex $z$
with color $1$, $\text{N}_z$ is the number of occurrences of $T_{v_1}
\cong T$ in $G$ with vertex $z$.
Therefore, the number of occurrences of $T$ in $G$ is given by the sum of
all $\text{N}_z$ such that $z \in L(1)$, according to Step~8. 

\begin{theorem} \label{theo:numocc}
Let $G$ be a vertex-colored graph and $T$ a colorful tree such that
$|V_T|~\leq~|V_G|$. The number $\eta(G, T)$ can be computed in
$O(V_G+E_G)$ time.
\end{theorem}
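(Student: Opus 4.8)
The plan is to show that Algorithm~$\eta$ runs in $O(V_G+E_G)$ time by accounting separately for the preprocessing (Steps~1--3) and for the main nested loop (Steps~4--12). First I would argue that the breadth-first traversal in Step~2 and the relabeling of colors in Step~3 take $O(V_T+E_T)$ time, which is $O(V_G+E_G)$ since $|V_T|\le|V_G|$ and $T$ is a tree. I would also assume a linear-time preprocessing that builds the lists $L(i)$ by bucketing the vertices of $G$ according to their colors; since the colors are the integers $1,\ldots,|V_T|$, this is a single pass over $V_G$ and costs $O(V_G)$. After this, every vertex $z\in V_G$ appears in exactly one list $L(\colorf(z))$, so the outer loop over colors together with the inner loop over $L(i)$ visits each vertex of $G$ exactly once.

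The heart of the argument is bounding the total cost of evaluating the product-of-sums formula~(\ref{exp:numberOfOcurrences}) summed over all $z\in V_G$. For a fixed $z$ with $\colorf(z)=i=\colorf(u)$, computing $\textnormal{N}_z$ requires, for each direct descendant $v\in A_u$, summing $\textnormal{N}_y$ over those neighbors $y$ of $z$ in $G$ whose color matches $\colorf(v)$. The key observation is that each such $y$ is reached by traversing the edge $yz\in E_G$ exactly once, and each edge of $G$ contributes to at most one such summation because the colors are distinct (so an edge $yz$ can be relevant to the computation of $\textnormal{N}_z$ only for the unique partition determined by $\colorf(y)$ and $\colorf(z)$). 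Thus, summing over all $z$, the total number of edge-inspections is $O(E_G)$, and the multiplications that combine the per-descendant sums into the product contribute $O(\deg_T(u))$ per vertex $u$, which over all of $T$ telescopes to $O(E_T)=O(V_T)=O(V_G)$.

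To make this rigorous I would charge the work of Step~9 to the incident edges of $z$: scanning the adjacency list of $z$ to identify $\text{adj}_z$ costs $O(\deg_G(z))$, and $\sum_{z\in V_G}\deg_G(z)=2|E_G|$, giving $O(E_G)$ overall. The grouping of the neighbors by the color of the corresponding direct descendant $v$ can be done on the fly while scanning, since each neighbor's color determines at most one factor of the product, so no extra overhead beyond the edge scan is incurred. The final summation in the \textbf{return} step ranges over $L(1)$ and costs $O(|L(1)|)=O(V_G)$. Combining all contributions yields the claimed $O(V_G+E_G)$ bound.

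The main obstacle I anticipate is the bookkeeping that guarantees each edge is charged only once and that the ``grouping by $\colorf(v)$'' inside the product does not hide a hidden factor: one must be careful that, while scanning $\deg_G(z)$ neighbors of $z$, the partial sums for the several direct descendants $v\in A_u$ are accumulated in constant time per neighbor (e.g.\ by indexing the partial sums by the color $\colorf(y)$), so that the whole computation of $\textnormal{N}_z$ stays within $O(\deg_G(z)+|A_u|)$. Verifying that this indexing is $O(1)$ per neighbor---and hence that the amortized edge-charging genuinely gives linear total time rather than, say, $O(E_G\cdot\Delta_T)$---is the delicate point of the analysis.
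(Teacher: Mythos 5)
Your overall strategy---charging each summand in Step~9 to a distinct edge of $G$---is the same edge-charging argument as the paper's proof, which simply asserts that Algorithm~\ref{alg:numocc} performs at most $|E_G|$ products and $|E_G|$ sums. But your accounting of the multiplications contains a genuine flaw. You first claim the multiplications contribute $O(\deg_T(u))$ per vertex $u$ and ``telescope'' to $O(E_T)$; this misattributes a per-$z$ cost to a per-$u$ cost, since the product over $A_u$ is evaluated once for \emph{every} vertex $z \in L(\colorf(u))$, not once per tree vertex $u$. Your later, corrected per-vertex bound $O(\deg_G(z) + |A_u|)$ does not repair this: summed over all $z$ it gives $O(E_G + \sum_z |A_{u(z)}|)$, and the second term can be $\Theta(V_G \cdot V_T)$. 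Concretely, let $T$ be a star whose root has $k = |V_T|-1$ children and let $G$ contain $\Theta(V_G)$ vertices of the root's color, each of constant degree; then every such $z$ pays $\Theta(k)$ just to look at its $|A_u|$ factors (most of which are empty sums, forcing $\text{N}_z = 0$), for a total of $\Theta(V_G \cdot V_T)$---e.g.\ $\Theta(V_G^{3/2})$ with $k = \sqrt{V_G}$---even though $|E_G| = O(V_G)$. So the bound you state proves only $O(E_G + V_G \, \Delta_T)$, essentially the very bound you yourself say must be excluded; flagging the indexing question as ``the delicate point'' identifies the problem but does not resolve it.

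The missing idea is to evaluate the product without ever touching the empty factors. While scanning the adjacency list of $z$ in $O(\deg_G(z))$ time, accumulate the per-color partial sums in an array indexed by color as you propose (membership of a neighbor's color in $\colorf(A_u)$ is an $O(1)$ parent-array check, since colors are assigned in BFS order), but additionally maintain a counter of how many \emph{distinct} colors of $\colorf(A_u)$ have appeared among the neighbors. After the scan, compare this counter with $|A_u|$ in $O(1)$ time: if it is smaller, some factor is an empty sum and $\text{N}_z = 0$; otherwise multiply together only the at most $\deg_G(z)$ partial sums that were actually touched, and reset exactly those array entries. The cost of computing $\text{N}_z$ then becomes $O(\deg_G(z) + 1)$ with no $|A_u|$ term, and summing over all $z$ yields the claimed $O(V_G + E_G)$. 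With this repair your argument is complete, and it in fact supplies the implementation detail that the paper's own (quite terse) proof leaves implicit when it asserts the $|E_G|$-products and $|E_G|$-sums bound.
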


\begin{proof}
The list of vertices to be processed can be established in a
preprocessing step in $O(V_T)$ time. Taking into account the computation
of $\text{N}_z$ for all vertices in $V_G$, Algorithm~$\eta(G, T)$
performs at most $|E_G|$ products and $|E_G|$ sums,
spending thus $O(E_G)$ time. Then, we compute $\sum_{z \in L(1)}
\text{N}_z$ in $O(V_G)$ time. Therefore, the running time for
calculating $\eta(G, T)$ is $O(V_G+E_G)$.
\end{proof}

Note that, with some preprocessing and specially for small motifs,
Algorithm~\ref{alg:numocc} can also be used when one is interested
only in colors, not in topology, i.e., ignoring structural
similarity. In order to do so, given a set of colors, we generate and
count all colorful trees containing exactly those colors, summing
the output values.

\section{\label{sec:infe} Inference}

\newcommand{\Prob}[1]{\protect{\ensuremath{\textup{Pr}\{#1\}}}}
\newcommand{\Ind}[1]{\protect{\ensuremath{\textup{I}\{#1\}}}}
\newcommand{\Esp}{\protect{\ensuremath{\textup{E}}}}
\newcommand{\Var}{\protect{\ensuremath{\textup{Var}}}}

In this section, we consider the problem of detecting subgraphs in a
network $G$ occurring with high frequency. The concept ``high
frequency'' here depends not only on the number of occurrences of a
given subgraph but also on the number of occurrences that the subgraph
appears in a random graph. In previous work, Schbath et
al.~\cite{SLS2009} define a null model, based on
Erdös~\cite{Erdos1947}, assuming that edges are independent and
distributed according to a Bernoulli distribution with the same
parameter $p \in (0, 1]$ for each edge. In addition, we add vertex
  colors to the random graph model and, since we are also considering
  topological structure to define motifs, our null model uses a
  parameter function that depends on the colors of vertices that are
  adjacent in the graph. 

\subsection{\label{sec:eval} Evaluating high frequency}

We propose a null model as a generalization of the Erdös-Rényi model,
where the occurrence of colors is considered. This approach takes into
account that some pairs of colors in vertex-colored graphs
representing biological networks are more likely to be connected than
others.

Let $a$ and $b$ be colors in $\colorset$ and let $G$ be a
vertex-colored graph. Let $|a| := |\{v \in V_G : \colorf(v) = a\}|$
and $|(a, b)| := |\{ (u, v) \in E_G : \colorf(u) = a \text{ and }
\colorf(v) = b\}|$. We define a parameter function (which implicitly
depends on $G$)
\[
p(a, b) = \frac{|(a, b)|}{|a| \cdot |b|}\:,
\]
for each pair of colors $a, b \in \colorset$. Notice that, for a
colorful graph, $|a| = |b| = 1$ and $|(a, b)| \in \{ 0, 1\}$, which
implies that $p(a, b)$ is either $0$ or $1$.

A random vertex-colored graph $R$ defined from a vertex-colored graph
$G$ is such that $V_R = V_G$, with same color function and each edge
$(x, y)$ occurs in $E_R$ independently with probability $0 \leq
p(\colorf(x), \colorf(y)) \leq 1$.  Thus, the probability of a
particular graph $R$ in a sample space $\mathcal{G} = \{
\text{vertex-colored graph }R : V_R = V_G\}$ is
\[
\Prob{R} = \!\! \prod_{(x,y) \in E_R} \!\! p(\colorf(x), \colorf(y)) \cdot
\!\! \prod_{(x,y) \not\in E_R} \!\! (1 - p(\colorf(x), \colorf(y)))\:.
\]

A \emph{candidate for occurrence of a motif $T$ in $R$ (or $G$)}, or
simply \emph{candidate}, is a vertex set $S \subseteq V_R$ such that
$\colorf(S) = \colorf(V_T)$ and $|S| = |V_T|$. We denote by
$\mathcal{S}$ the set of candidates of $T$ in $R$. We have the
following collection of straightforward results on candidates for
occurrence of a motif in a random graph.

\begin{lemma} \label{lem:prob+exp}
  Let $R$ be a random vertex-colored graph, $T$ a colorful tree, and
  $\mathcal{S}$ a set of candidates for occurrence of $T$ in $R$. For
  a candidate $S \in \mathcal{S}$, define the indicator random
  variable $X_S$ which states whether a candidate $S$ is an occurrence
  of $T$ and let $\mu$ be the expected value of $X_S$ for any
  candidate $S \in \mathcal{S}$. Let $X$ be a random variable
  representing the number of occurrences of $T$ in $R$. Then we have
  the following results:
  \begin{enumerate}
    \item[(i)] The number of candidates for occurrence of $T$ in $R$ is
      $
      |\mathcal{S}| = \prod_{t \in \colorf(T)} |t|\:,
      $
      and it can be obtained in $O(V_G)$ time.
    \item[(ii)] The expectation
      $
      \Esp[X_S] = \prod_{(u, v) \in E_T} p(\colorf(u), \colorf(v))
      $
      can be obtained in $O(V_T)$ time.
    \item[(iii)] The expected number of occurrences of $T$ in
      $R$ is given by
      $
      \Esp[X] = |\mathcal{S}| \cdot \mu\:,
      $
      and it can be obtained in $O(V_G)$ time.
    \item[(iv)] The variance of the number of occurrences of $T$ in
      $R$ is given by $ \Var[X] = \sum_{I \subseteq \colorf(V_T)} m(I)
      \cdot \mu^2/f(I)$, where the value $m(I)$ denotes the number of
      ordered pairs of candidates $S_i$ and $S_j$ in $V_R$ such that
      $\colorf(S_i \cap S_j) = I$, $f(I) = \prod_{(u, v) \in E_{T[I]}}
      \Prob{(\colorf(u), \colorf(v))}$, and $T[I]$ is the subgraph of
      $T$ induced by vertices whose colors are in $I$. Moreover,
      $\Var[X]$ can be obtained in $O(2^{V_T} \cdot V_T)$ time.
  \end{enumerate}
\end{lemma}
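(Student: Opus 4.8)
The plan is to dispatch parts (i)--(iii) by elementary counting together with linearity of expectation, reserving the real effort for the second-moment computation behind part (iv). For (i), since $T$ is colorful a candidate $S$ must contain exactly one vertex of each color of $\colorf(T)$, and the choices for distinct colors are independent, so $|\mathcal{S}|=\prod_{t\in\colorf(T)}|t|$; a single scan of $V_G=V_R$ records every multiplicity $|t|$ and the product has at most $|V_T|\le|V_G|$ factors, giving $O(V_G)$ time. For (ii), the key point is that for a candidate $S$ the color-preserving embedding of $T$ into $R[S]$ is forced: each $u\in V_T$ maps to the unique $s(u)\in S$ of color $\colorf(u)$, so $X_S=1$ exactly when every edge $(u,v)\in E_T$ is realized by the edge $s(u)s(v)$ of $R$ (which also makes $R[S]$ connected, so that requirement is free). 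As $T$ is simple and colorful, distinct edges of $T$ map to distinct and hence independent edges of $R$, each present with probability $p(\colorf(u),\colorf(v))$, whence $\Esp[X_S]=\prod_{(u,v)\in E_T}p(\colorf(u),\colorf(v))$; this value does not depend on $S$, so it equals $\mu$, and the product over the $|V_T|-1$ edges costs $O(V_T)$. Part (iii) is then immediate: writing $X=\sum_{S\in\mathcal{S}}X_S$ and applying linearity gives $\Esp[X]=|\mathcal{S}|\cdot\mu$, computable in $O(V_G)$ by combining (i) and (ii).

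Part (iv) is where the real work lies. I would expand $\Esp[X^2]=\sum_{(S_i,S_j)}\Esp[X_{S_i}X_{S_j}]$ over ordered pairs of candidates and evaluate each joint expectation through the overlap color set $I=\colorf(S_i\cap S_j)$. The decisive observation is that an edge $(u,v)\in E_T$ is realized by the \emph{same} edge of $R$ in both embeddings precisely when both endpoints fall on shared vertices, i.e.\ when $\colorf(u),\colorf(v)\in I$, which is exactly the condition $(u,v)\in E_{T[I]}$; every remaining edge of $T$ contributes two distinct edges of $R$, one per embedding. Because $T$ is simple and colorful these three edge families are pairwise disjoint and hence independent, so the joint probability factors as $f(I)\cdot(\mu/f(I))^2=\mu^2/f(I)$. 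Grouping the ordered pairs by their overlap set then yields $\Esp[X^2]=\sum_{I\subseteq\colorf(V_T)}m(I)\,\mu^2/f(I)$, from which the variance follows via $\Var[X]=\Esp[X^2]-(\Esp[X])^2$; as a sanity check the diagonal pairs have $I=\colorf(V_T)$ and $f(I)=\mu$, each contributing $\mu$, consistent with $X_{S_i}$ being an indicator.

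It remains to bound the running time of (iv). I would evaluate $m(I)$ in closed form by noting that the colors in $I$ must be matched by the two candidates ($|t|$ ordered choices each) while the colors outside $I$ must be mismatched ($|t|(|t|-1)$ each), so $m(I)=\prod_{t\in I}|t|\cdot\prod_{t\notin I}|t|(|t|-1)$, computable in $O(V_T)$ per subset; likewise $f(I)$ is a product over the $O(V_T)$ edges of $T[I]$. Ranging over all $2^{|V_T|}$ color subsets then gives the claimed $O(2^{V_T}\cdot V_T)$ bound. The step I expect to demand the most care is the independence used in the factorization $\mu^2/f(I)$: one must check that the unshared edges of the two embeddings really are distinct edges of $R$, which is exactly where colorfulness of $T$ enters, since a mismatched color forces the corresponding endpoints of the two embeddings to differ.
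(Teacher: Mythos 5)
Your proposal is correct and follows essentially the same route as the paper: parts (i)--(iii) by direct counting and linearity of expectation, and part (iv) by expanding $\Esp[X^2]$ over ordered candidate pairs grouped by the overlap color set $I$, with the same factorization $\mu^2/f(I)$ and the same closed form $m(I)=\prod_{t\in I}|t|\cdot\prod_{t\notin I}|t|\,(|t|-1)$ yielding the $O(2^{V_T}\cdot V_T)$ bound. Your only additions are to spell out the independence/edge-disjointness argument (colorfulness forces the embeddings, so the shared required edges are exactly those of $T[I]$ and the unshared ones are distinct edges of $R$), which the paper leaves implicit, and to state explicitly that the variance is recovered via $\Var[X]=\Esp[X^2]-(\Esp[X])^2$, exactly as in the paper's own proof.
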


\begin{proof}
  \noindent (Proof for (\emph{i})). 
  For a vertex $t$ in $T$, recall that $|t|$ is the number of vertices
  in $V_G$ $(= V_R)$ with color $t$. Since $T$ is a colorful graph,
  the number of candidates of $T$ in $R$ is obtained directly by the
  product of the multiplicity of each color in $V_G$ and therefore the
  assertion follows. Moreover, that number can be computed in $O(V_G)$
  time.

  \medskip

  \noindent (Proof for (\emph{ii})).
  For a candidate $S$ in $\mathcal{S}$, we introduce an indicator
  random variable
  \[
  X_S = \left\{
  \begin{array}{ll}
    1\:, & \text{if candidate $S$ is an occurrence of $T$}\:,\\
    0\:, & \text{otherwise}\:.
  \end{array}
  \right.
  \]
  
  Notice that a necessary and sufficient condition for a candidate $S$
  being an occurrence of a colorful tree $T$ in $R$ is that vertices
  of $S$ with color $\colorf(u)$ and $\colorf(v)$ are adjacent in $R$,
  for each $(u, v) \in E_T$. Thus, the probability of $S$ being an
  occurrence of $T$ in $R$ is
  \begin{equation} \label{eq:probX_S}
    \Prob{X_S = 1} = \prod_{(u, v) \in E_T} p(\colorf(u),
    \colorf(v))\:.
  \end{equation}

  Since $X_S$ is an indicator random variable, we have
  \[
  \Esp[X_S] = 
  1 \cdot \Prob{X_S = 1} + 0 \cdot
  \Prob{X_S = 0}\:,
  \]
  and thus, by equality~(\ref{eq:probX_S}), the assertion
  follows. Notice also that $|E_T| = O(V_T)$ and therefore $\Esp[X_S]$
  can easily be computed in $O(V_T)$ time.

  \medskip

  \noindent (Proof for (\emph{iii})). 
  Notice from (\emph{ii}) that $\Esp[X_S]$ does not depend
  on the candidate $S$. That is, if $S$ and $S'$ are two candidates of
  $T$ in $R$, then $\Esp[X_S] = \Esp[X_{S'}]$. Thus, for simplicity,
  we denote $\Esp[X_S]$ by $\mu$, for any $X_S$ and $S$.

  Consider another random variable $X$ denoting the number of
  occurrences of a motif $T$ in $R$. That is, $X = \sum_{S \in \mathcal{S}}
  X_S$.

  It is immediate that
  \[
  \Esp[X] = \Esp\bigg[\sum_{S \in \mathcal{S}} X_S\bigg] = \sum_{S \in
    \mathcal{S}} \Esp[X_S] = \sum_{S \in \mathcal{S}} \mu =
  |\mathcal{S}| \cdot \mu\:,
  \]
  where the second equality follows from linearity of expectation. By
  (\emph{i}), $|\mathcal{S}|$ can easily be computed in $O(V_G)$ time
  and, by (\emph{ii}), $\mu$ can be computed in $O(V_T)$ time. It
  follows that $|\mathcal{S}| \cdot \mu$ can be computed in $O(V_G) +
  O(V_T) = O(V_G)$ time.

  \medskip

  \noindent (Proof for (\emph{iv})). 
  Recall that
  \begin{equation}
  \Var[X] = \Esp[X^{2}] - \Esp^{2}[X] = \Esp[X^{2}] -
  (|\mathcal{S}|\mu)^2\:, \label{eq:calc-var}
  \end{equation}
  where the last equality is valid from (\emph{iii}). Notice that
  \begin{equation} \label{eq:X^2}
    X^2 = \bigg( \sum_{S \in \mathcal{S}} X_S \bigg)^2 = \sum_{S_i, S_j
      \in \mathcal{S}} X_{S_i} X_{S_j}\:. 
  \end{equation}
  Since
  \[
  X_{S_i} X_{S_j} = \left\{ \begin{array}{ll}
    1\:, & \text{if } X_{S_i} = 1 \text{ and } X_{S_j} = 1\:, \\
    0\:, & \text{otherwise}\:,
  \end{array} \right.
  \]
  we have
  
  \begin{align} 
    \Esp[X_{S_i}\!X_{S_j}] &\!=\!\Prob{\!X_{S_i}\!X_{S_j}\!=\!1\!} 
    \!=\!\Prob{\!X_{S_i}\!=\!1 \text{ and } X_{S_j}\!=\!1\!}. \label{eq:Esp}
  \end{align}
  
  \noindent In addition, considering a set of colors $I$, we define
  \[
  \Prob{I} = \sum_{\substack{S_{i}, S_{j} \in \mathcal{S} :
      \\ \colorf(S_i \cap S_j) = I}} \Prob{X_{S_i} = 1 \text{
      and } X_{S_j} = 1}\:.
  \]
  It follows, using~(\ref{eq:X^2}),~(\ref{eq:Esp}) and 
  linearity of expectation, that
  
  \begin{align}
    \Esp[X^2] &= \Esp\bigg[\sum_{S_i, S_j \in \mathcal{S}} X_{S_i}
      X_{S_j} \bigg] = \sum_{S_i, S_j \in \mathcal{S}} \Esp[X_{S_i}
      X_{S_j}] \notag \\
    &=\!\!\sum_{S_i, S_j \in \mathcal{S}}\!\!\!\!\Prob{X_{S_i}\!=\!1 \text{ and }
      X_{S_j}\!=\!1} 
    \!=\!\!\!\!\sum_{I \subseteq \colorf(V_T)} \!\!\!\! \Prob{I}\:. \label{eq:EspX^2}  
  \end{align}
  
  Therefore, since $\Var[X] = \Esp[X^2] - (|\mathcal{S}|\mu)^2$ and we
  know how to compute $|\mathcal{S}|\mu$, if we are able to compute
  $\Prob{I}$ for each subset $I$ of $\colorf(V_T)$, then we can
  compute $\Esp[X^2]$ and $\Var[X]$.

  Now, consider $S_i, S_j \in \mathcal{S}$ such that $\colorf(S_i
  \cap S_j) = I$ and denote by $T[I]$ the subgraph of $T$ induced by
  vertices whose colors are in $I$. Let $f(I) := \prod_{(u,v) \in
    E_{T[I]}} \Prob{(\colorf(u), \colorf(v))}$. Since graphs with
  vertex subsets $S_i$ and $S_j$ share only the edges whose
  extremities are in $I$, we have that
  
  \begin{align*}
    \Prob{X_{S_i} = 1 \text{ and } X_{S_j} = 1} &= \frac{\Prob{X_{S_i} = 
        1} \cdot \Prob{X_{S_j} = 1}}{f(I)} \\
    &= \frac{\mu^2}{f(I)}\:.
  \end{align*}

  It follows that
  
  \begin{align*}
    \Prob{I} & = \sum_{\substack{S_i, S_j \in \mathcal{S} :
        \\ \colorf(S_i \cap S_j) = I}} \Prob{X_{S_i} = 1, X_{S_j} =
      1} \\ 
    & = \sum_{\substack{S_i, S_j \in \mathcal{S} :
        \\ \colorf(S_i \cap S_j) = I}} \frac{\mu^2}{f(I)} = m(I)
    \cdot \frac{\mu^2}{f(I)}\:,
  \end{align*}
  where $m(I)$ is the number of ordered pairs of candidates $S_i$ and
  $S_j$ in $V_R$ such that $\colorf(S_i \cap S_j) = I$. Therefore,
  the assertion follows.

  Denote the set $\colorf(V_T) - I$ by $\bar{I}$. There exist
  $\prod_{i \in I} |i|$ ways of choosing a subset of $|I|$ vertices
  that are in $S_i \cap S_j$, and there exist $\prod_{i \in \bar{I}}
  |i| \cdot \prod_{i \in \bar{I}} (|i|-1)$ ways of choosing subsets to
  play the role of $S_i - (S_i \cap S_j)$ and $S_j - (S_i \cap
  S_j)$. Therefore it follows that $m(I) = \prod_{i \in I} |i| \cdot
  \prod_{i \in \bar{I}} |i| \cdot \left( \prod_{i \in \bar{I}} (|i|-1)
  \right)$.

  Observe that $\mu$, $f(I)$ and $m(I)$ can be computed in $O(V_T)$
  time for each $I$, and thus we spend $O(V_T)$ time for calculating
  $\Prob{I}$ for each $I \subseteq \mathcal{S}$. Therefore, since
  there exist $2^{|V_T|}$ subsets of $V_T$, we have that $\Esp[X^2]$ can
  be computed in $2^{|V_T|} \cdot O(V_T) = O(2^{V_T} \cdot V_T)$ time,
  and this is the running time to compute $\Var[X]$.
\end{proof}

Recall that $\eta(G,T)$ denotes the number of occurrences of a
colorful tree $T$ in a graph $G$. Using the previous results, we can
state the chance of a random graph having a certain number of
occurrences of $T$.

\begin{theorem} \label{theo:number}
Let $G$ be a vertex-colored graph, $T$ a colorful tree and $X$ 
a random variable describing the number of occurrences
of $T$ in $G$. If $\eta(G, T) > \Esp[X]$, then
\begin{equation} \label{equ:Cheb}
\Prob{X \geq \eta(G, T)} \leq \frac{\Var[X]}{(\eta(G, T) - \Esp[X])^2}\:.
\end{equation}
\end{theorem}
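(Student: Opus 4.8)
The plan is to recognize inequality~(\ref{equ:Cheb}) as a direct consequence of Chebyshev's inequality applied to the random variable $X$. Recall that Chebyshev's inequality asserts that for any random variable with finite mean and variance and any real $t > 0$, one has $\Prob{|X - \Esp[X]| \geq t} \leq \Var[X]/t^2$. The hypotheses supply everything needed to invoke it: the mean $\Esp[X]$ and the variance $\Var[X]$ are finite (indeed they were computed explicitly in Lemma~\ref{lem:prob+exp}), and the assumption $\eta(G, T) > \Esp[X]$ guarantees that the deviation parameter $t := \eta(G, T) - \Esp[X]$ is strictly positive, so it may legitimately be used in the bound.

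First I would set $t = \eta(G, T) - \Esp[X] > 0$ and observe that the one-sided event $\{X \geq \eta(G, T)\}$ coincides with $\{X - \Esp[X] \geq t\}$. This event is contained in the two-sided event $\{|X - \Esp[X]| \geq t\}$, so monotonicity of probability gives $\Prob{X \geq \eta(G, T)} \leq \Prob{|X - \Esp[X]| \geq t}$. Applying Chebyshev's inequality to the right-hand side and substituting back $t = \eta(G, T) - \Esp[X]$ yields exactly the stated bound.

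There is essentially no obstacle to overcome here; the result is immediate once the correct deviation parameter is identified. The only point requiring any care is the role of the hypothesis $\eta(G, T) > \Esp[X]$: it is precisely what makes $t$ positive, and hence what makes the passage from the one-sided to the two-sided deviation event meaningful and Chebyshev applicable. Without it the right-hand side of~(\ref{equ:Cheb}) would still be formally defined, but the inequality would be vacuous. Consequently the proof reduces to these two short steps and requires no further computation.
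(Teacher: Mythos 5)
Your proof is correct and takes the same route as the paper, which likewise computes $\Esp[X]$ and $\Var[X]$ via Lemma~\ref{lem:prob+exp} and then invokes Chebyshev's inequality; you merely spell out the details (setting $t = \eta(G,T) - \Esp[X] > 0$ and passing from the one-sided event to the two-sided event) that the paper leaves implicit. No gaps.
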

\begin{proof}
After computing $\eta(G, T)$, and $\Esp[X]$ and $\Var[X]$ according to
Lemma~\ref{lem:prob+exp}(\emph{iii})
and Lemma~\ref{lem:prob+exp}(\emph{iv}), we use Chebyshev's inequality and
the result follows immediately. Such an inequality is used for
evaluating an upper bound for the chance of a random graph having at
least $\eta(G, T)$ occurrences of $T$ in $G$, suggesting that $T$ is
an actual motif in $G$ if $\Var[X] / (\eta(G, T) - \Esp[X])^2$ is
small.
\end{proof}

Moreover, we define a quality measure indicating how expected is the
number $\eta(G, T)$. For a real number $y$, we say that a colorful
tree $T$ is a \emph{$y$-motif} of a vertex-colored graph $G$ if
\[
y = \max \left\{ 0, 1 - \frac{\Var[X]}{(\eta(G, T) - \Esp[X])^2}
\right\}\:.
\]
Observe that the running time to compute $y$ is exponential, as
Lemma~\ref{lem:prob+exp}(\emph{iv}) states. However, we are usually
interested in small values of $|V_T|$. Typically for these cases we
can obtain $y$ very quickly as we will see in Section~\ref{sec:expe}.

\subsection{\label{sec:inmo} Inferring motifs}

Given the previous theoretical framework, we present now an algorithm
for inferring statistically significant colorful motifs in a given
vertex-colored graph. This algorithm is an enhanced version of that
in~\cite{RAS2015}, improved with significant speedup.

First, consider a vertex-colored graph $G$ representing a biological
network. A \emph{consistent colorful tree} with respect to $G$ is a
colorful tree $T$ such that $(u,v) \in E_T$ only if there exists
$(u',v') \in E_G$ such that $\colorf(u) = \colorf(u')$ and $\colorf(v)
= \colorf(v')$, for each pair of vertices $u, v$ in $T$. Intuitively,
a consistent colorful tree does not contain a pair of adjacent
vertices whose colors never occur on adjacent vertices in the given
network. Considering only consistent colorful trees when inferring
motifs is a (constant-factor) optimization, since in biological
networks usually some pairs of colors are never adjacent.

Since the direct inference for large motifs is highly time expensive,
the algorithm begins inferring motifs of some smaller initial size
$s$, using the results as a base for increasing the size of motifs
inferred up to a large size $g$. This is performed in two major
steps. Given a set of colors of interest $C \subseteq \colorset$
(e.g. colors representing a set of relevant reactions that one wants
to investigate in a metabolic network), a graph $G$, an initial size
$s$, a goal size $g$, a threshold $t$, and a score $y$, with $0 \leq y
\leq 1$, the first main step generates all possible subsets $\Gamma
\subseteq C$, with $|\Gamma| = s$. Then, for each $\Gamma$, it obtains
every consistent colorful tree $T$ of size $s$ such that $\colorf(T) =
\Gamma$, producing a set $\mathcal{T}$ of trees that occur at least
$t$ times in $G$ and which are $y$-motifs. In the second step, it
obtains several consistent colorful trees $T'$ from every $T \in
\mathcal{T}$ by adding a new vertex to $T$, searching $T'$ in $G$, and
obtaining a new set $\mathcal{T}'$ containing $y$-motifs that occur at
least $t$ times in $G$. At the end of this step, parameter $s$ is
incremented and the second step is repeated for $\mathcal{T}'$ as
$\mathcal{T}$. Algorithm~\ref{alg:inference}
(\textsc{Motif-Inference}) presents a pseudocode implementing these
steps.

\begin{algorithm}
\caption{\label{alg:inference} \textsc{Motif-Inference($C, G, s, g, t, y$)}}
\begin{algorithmic}[1]
  \footnotesize
  
  \REQUIRE{set of colors $C$, graph $G$, initial size
    $s$, goal size $g$, threshold $t$, score $y$ (with $0
    \leq y \leq 1$)}

  \ENSURE{a set $\mathcal{T}$ of trees occurring at least $t$ times in 
    $G$ and which are $y$-motifs} 

  \smallskip
  
  \STATE Let $\mathcal{P}(C)$ be the powerset of $C$ and
  $\mathcal{P}_s(C) \subseteq \mathcal{P}(C)$ such that $\Gamma \in
  \mathcal{P}_s(C)$ iff $\Gamma \in \mathcal{P}(C)$ and $|\Gamma| =
  s$ \label{mi:line:step1s}
  
  \STATE $\mathcal{T} \gets \emptyset$ \label{mi:line:step9999}

  \FOR{\textbf{each} $\Gamma$ in
    $\mathcal{P}_s(C)$} \label{mi:line:step1loop} 
  
    \FOR{\textbf{each} consistent colorful tree $T$ of size $s$
      such that $\colorf(V_T) = \Gamma$}
    
      \IF{$\eta(G, T) \geq t$ \textbf{and} $T$ is a $y$-motif}
    
        \STATE $\mathcal{T} \gets \mathcal{T} \cup
        \{T\}$ \label{mi:line:step1e}
    
      \ENDIF

    \ENDFOR
      
  \ENDFOR

  \WHILE{$s < g$} \label{mi:line:scatterbef1}
  
    \STATE $\mathcal{T}' \gets \emptyset$

    \FOR{\textbf{each} $T \in \mathcal{T}, u \in V_T$ \textbf{and} $a
      \in C\setminus \colorf(V_T)$} \label{mi:line:step2loop}
    
      \STATE $v \gets \text{a new vertex with color } a$

      \STATE $T':= T + \{uv\}$ 

      \IF{$T'$ is a consistent colorful tree, $\eta(G, T') \geq t$
        \textbf{and} $T'$ is a $y$-motif}
      
        \STATE $\mathcal{T}' \gets \mathcal{T}' \cup \{T'\}$

      \ENDIF

    \ENDFOR

    \STATE
    \textsc{Remove-Duplicates($\mathcal{T'}$)} \label{mi:line:rem_dupl}

    \STATE $\mathcal{T} \gets
    \mathcal{T}'$ \label{mi:line:scatterbef2}
    
    \STATE $s \gets s + 1$
    
  \ENDWHILE
    
  \RETURN $\mathcal{T}$%

\end{algorithmic}
\end{algorithm}
 
The duplicate removal in line~\ref{mi:line:rem_dupl} avoids repeated
motifs in $\mathcal{T}$, since they may arise in the second step of
the algorithm. For each $T \in \mathcal{T}$, a signature is generated,
which is unique for each colorful tree. Signatures are used together
with a hash table to detect and discard identical motifs.

Lines~\ref{mi:line:step1s} and~\ref{mi:line:step9999} of
\textsc{Motif-Inference} spend $O(s \cdot \binom{|C|}{s}) = O(s
|C|^{s})$ time. For finding occurrences of a motif in a graph we use
the Algorithm~\ref{alg:numocc}. This is a significant improvement over
the inference algorithm version in~\cite{RAS2015}, which used
Algorithm~\ref{alg:all-colorful}, since finding the number of
occurrences of some motif in $G$ is much faster than finding the
occurrences themselves. Given a graph $G$ with $|V_G|$ vertices and
$|E_G|$ edges and a colorful tree $T$, the number of occurrences of
$T$ in $G$ is given by $\eta(G, T)$ in $O(V_G+E_G)$ time. Thus, the
running time of lines 3-6 is $O\left(|C|^s \cdot s^{s-2} \cdot
(V_G+E_G) \right)$, since $\binom{|C|}{s} = O(|C|^s)$ and, from
Cayley's formula, the number of colorful trees with $s$ distinct
colors is~$s^{s-2}$. Besides that, the algorithm spends $O(|C|^g
g^{g-2} \cdot (V_G+E_G))$ time in the loop of lines 7-16 in the worst
case. Therefore, the running time for \textsc{Motif-Inference} is
bounded by lines 7-16, since $g > s \geq 1$.

We note that the inference algorithm returns a set $\mathcal{T}$ of
$y$-motifs in $G$. After that, the \textsc{All-Colorful} algorithm
(Algorithm~\ref{alg:all-colorful}) can be used to find all occurrences
in $G$ of some $T \in \mathcal{T}$ of interest.

We also observe that only few changes in the algorithm are needed for
building a parallel version. First, it must divide $\mathcal{P}_s(C)$
(line \ref{mi:line:step1s}) among $p$ processors, assigning a set
$\Gamma_i$ to each processor $i \in \{1, \ldots, p\}$. Thus, each
processor will finish the major first step of the motif inference
(lines 3-6) having a pruned set $\mathcal{T}_i$. Then, it must
rebalance the sets $\mathcal{T}_i$ among processors, e.g., by sending
all $\mathcal{T}_i$ to a master processor and spreading them equally
among processors or by some load balancing heuristic. After the loops
in lines 3-6 and lines 9-13, the algorithm gathers at the master
processor a set $\mathcal{T} = \mathcal{T}_1 \cup \dotsb \cup
\mathcal{T}_p$, scattering it again before
lines~\ref{mi:line:scatterbef1} and~\ref{mi:line:scatterbef2}.

\section{\label{sec:expe} Experimental results}

We implemented Algorithms~\ref{alg:MCG} to~\ref{alg:inference} and a
parallel version of Algorithm \ref{alg:inference}, comparing their
results to publicly available tools. We also implemented some methods
described in this paper to evaluate them in practice
(e.g. Formula~(\ref{eq:calc-var}) to calculate approximately the
variance of the number of occurrences for some colorful
tree). Sequential experiments were made on a single Intel i7 3.40GHz
processor (one thread), while parallel experiments were run on a
cluster with 40 Intel Xeon Quad-core 2.4GHz machines, 2 threads per
core, totaling up to 320 simultaneous processes.

In the following paragraphs seven sets of experiments are
presented. The first set evaluates the variance calculated by
Formula~(\eqref{eq:calc-var}), and the second evaluates the method for
accessing motif exceptionality. Following, the performance of
Algorithm~\ref{alg:numocc} is measured for counting occurrences of
non-topological motifs as described above. Next, regarding the
inference of larger motifs, the quality of the incremental heuristic
of Algorithm \textsc{Motif-Inference} ($s < g$) is compared against
the inference directly for the goal size ($s = g$). The following set
of experiments correlate the number of motifs in the output and the
running time of the Algorithm \textsc{Motif-Inference} to some of its
parameters.  Algorithm~\ref{alg:inference} is compared then to the
tool \textsf{MOTUS} (inference), and finally
Algorithm~\ref{alg:all-colorful} is compared to the tool
\textsf{Torque} (search).

In all the experiments except for the last we used the meta\-bol\-ic
network of \textit{E.~coli} strain K-12 from
BioCyc~\cite{Caspi-etal-2016}. We removed big molecule reactions,
pathway holes and some highly connected compounds, such as
H\textsubscript{2}O, H\textsuperscript{+} and CO\textsubscript{2},
resulting in a reaction graph with 1576 vertices and 3657
edges. Vertices representing reactions associated to enzymes whose EC
numbers coincide in the first three positions were colored with the
same color. We selected as colors of interest the 30 most frequent
colors, since high frequency colors should challenge our algorithms in
workload and memory consumption most. All remaining vertices, about
half of the total number, were colored by the same ``dummy'' color.

We address the accuracy of the sample variance~\eqref{eq:sample-var}
for a colorful tree compared to the calculated
variance~\eqref{eq:calc-var} (see Fig~\ref{fig:samplevar}),
corroborating the correctness of~\eqref{eq:calc-var}. The sample
variance was obtained generating random graphs based on the real
network. They have the same vertices as the base graph, and an edge is
added connecting vertices $u$ and $v$ according to the frequency of
edges connecting vertices with colors $\colorf(u)$ and $\colorf(v)$ in
the base graph.  This model captures the fact that some reactions and
their context are more frequent and/or better conserved during
evolution than others~\cite{DPS07,PSB05}. We have made experiments
with motifs of size 6 and 9, observing that the sample variance always
converged to the calculated variance after generating a large amount
of random graphs ($\approx 5 \times 10^{6}$). Moreover, the calculated
variance is obtained in a fraction of a second whereas the sample
variance takes from minutes to hours to be calculated.

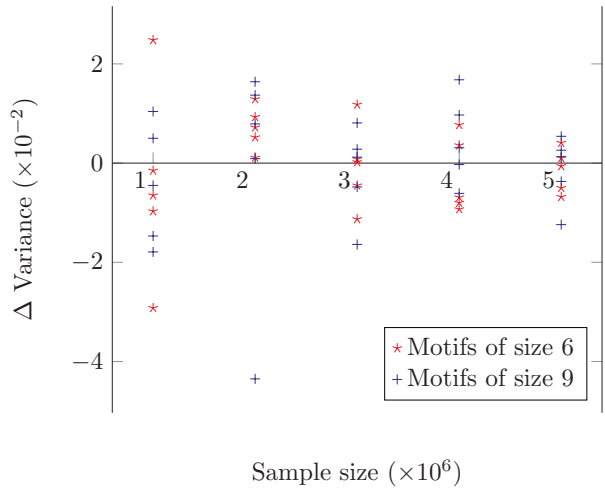
\begin{figure}[h]%
\centering%
\begin{tikzpicture}[scale=0.95]
\begin{axis}[
    align=center,
    xlabel={Sample size ($\times 10^6$)},
    ylabel={$\Delta$ Variance ($\times 10^{-2}$)},
    x label style={at={(axis description cs:0.5,0.0)},anchor=north},
    axis x line*=middle,
    enlargelimits=true,
    legend pos=south east,
    ytick scale label code/.code={},
    xtick scale label code/.code={},
    xticklabel style = {xshift=-.5em},
    ]
\addplot[
scatter/classes={
        6={mark=star,red},%
        9={mark=+,blue}%
},
scatter, only marks,
scatter src=explicit symbolic]
table[x=samplesize,y=dst,meta=motifsize] {sample-var.dat};
\legend{{Motifs of size 6}, {Motifs of size 9},
        }
\end{axis}
\end{tikzpicture}
\caption{\label{fig:samplevar} Distance from sample variance to
  calculated variance by sample size, for motifs of size 6 and 9.}
\end{figure}

For comparison against the null model, we ran the motif inference
algorithm for the \textit{E.~coli} meta\-bol\-ic network, using
different combinations of parameters, and for random graphs generated
as described above. As can be seen in Fig~\ref{fig:nullmodel}, for
$y=0.999$ as score for $y$-motifs, the cuts made by the algorithm on
random graphs drop significantly the amount of motifs found when they
grow in size, which is the opposite behavior of inference for real
networks. The same occurs for $y = 0.99$, in different scale (not
shown). This indicates that the method proposed in this work is able
to evaluate the exceptionality of motifs, i.e., the occurrence of a
motif more frequently than expected at random. When using the method
for inferring relevant motifs in practice, however, a higher $y$ value
or some other filter must be used, resulting in output sets of
reasonable sizes.

Regarding counting occurrences ignoring structural similarity as
mentioned above, we performed experiments for this together with
Algorithm~\ref{alg:inference}. Given sets of 6, 7, 8, 9 and 10 colors
of high frequency, we inferred motifs with these colors ignoring
topology and the process took 0.01, 0.15, 0.3, 0.4 and 0.7 seconds,
respectively, on a single machine.

We performed tests regarding differences between two methods:
incremental inference with $s = 5$, $g = 7$ and straight generation of
all motifs of size $g \in \{6,7\}$. As we can see in
Fig~\ref{fig:incr-straight}, there is little difference in the amount
of motifs found, particularly when setting $y$ to reasonable
values. Apart from being much faster, the incremental method misses
few motifs compared to generating all possibilities. Both tests were
performed in a parallel environment, the first (incremental) method
taking 8 seconds on average and the second about 208 seconds for $g =
7$.

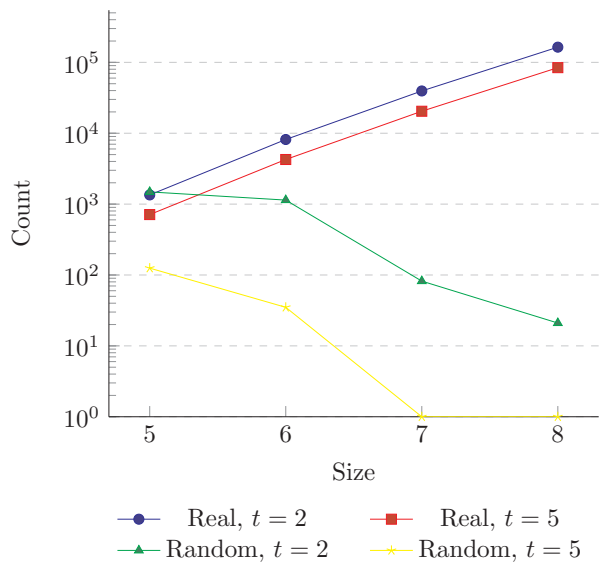
\begin{figure}[h]%
\centering%
\begin{tikzpicture}[scale=0.95]
\begin{semilogyaxis}[ 
	axis x line*=bottom, 
	axis y line*=left,
    align=center,
    xlabel={Size},
    ylabel={Count},
    ymin=1,
    ymajorgrids=true,
    grid style=dashed,
    xtick={5,6,7,8},
    legend style={ draw=none,
                   at={(0.0,-0.20)},
                   anchor=north west,
                   legend columns=2,
                   /tikz/every even column/.append style={column sep=0.5cm},
                 },
    ]
\addplot+[discard if not={type_threshold}{real_2}]
table[x=size,y=motifs] {nullmodel-0.999.dat};
\addplot+[discard if not={type_threshold}{real_5}]
table[x=size,y=motifs] {nullmodel-0.999.dat};
\addplot+[discard if not={type_threshold}{random_2},mark=triangle*,green,mark options={fill=green!60!black}]
table[x=size,y=motifs] {nullmodel-0.999.dat};
\addplot+[discard if not={type_threshold}{random_5},yellow]
table[x=size,y=motifs] {nullmodel-0.999.dat};
\legend{{Real, $t=2$},{Real, $t=5$},{Random, $t=2$},{Random, $t=5$}}
\end{semilogyaxis}
\end{tikzpicture}
\caption[]{\label{fig:nullmodel} Motif inference for real (\textit{E.~coli}) and random networks (null model); motifs are kept only if $y \ge 0.999$ and they occur more than the threshold $t$. Other parameters are $s = 5$, $g = 8$ and $C=\{1, \dotsc, 30\}$ (we infer motifs considering the 30 higher frequency colors.)}
\end{figure}

\begin{figure}[h]%
\centering%
\begin{tikzpicture}[scale=0.95]
\begin{axis}[ 
	axis x line*=bottom, 
	axis y line*=left,
    align=center,
    title={Motif inference comparison for some $y$ values,\\ $C=\{1, \dotsc, 30\}, s = 5, g = 7, t = 10$},
    xlabel={Size},
    ylabel={Count ($\times 10^4$)},
    ymin=1,
    ymajorgrids=true,
    grid style=dashed,
    xtick={5,6,7},
    ytick scale label code/.code={},
    legend style={ draw=none,
                   at={(0.10,-0.20)},
                   anchor=north west,
                   legend columns=2,
                   /tikz/every even column/.append style={column sep=0.5cm}
                 },
    ]
\addplot+[discard if not={y-motif}{0.1},red,mark options={fill=red!80!black},mark=square*]
table[x=size,y=count] {incr-straight.dat};
\addplot+[blue,mark options={fill=blue!80!black},mark=*]
coordinates {(6,4852)(7,19878)};
\addplot+[discard if not={y-motif}{0.99},mark=triangle*,green,mark options={fill=green!60!black}]
table[x=size,y=count] {incr-straight.dat};
\addplot+[discard if not={y-motif}{0.999},yellow]
table[x=size,y=count] {incr-straight.dat};
\legend{{$y=0.1$},{Straight},{$y=0.99$},{$y=0.999$}}
\end{axis}
\end{tikzpicture}
\caption[]{\label{fig:incr-straight} Motif counting for incremental
  method (with $s = 5$ and $g = 7$) compared to non-in\-cre\-men\-tal
  (directly for sizes 6 and 7). We infer motifs considering the 30
  higher frequency colors and set $t = 10$.}
\end{figure}
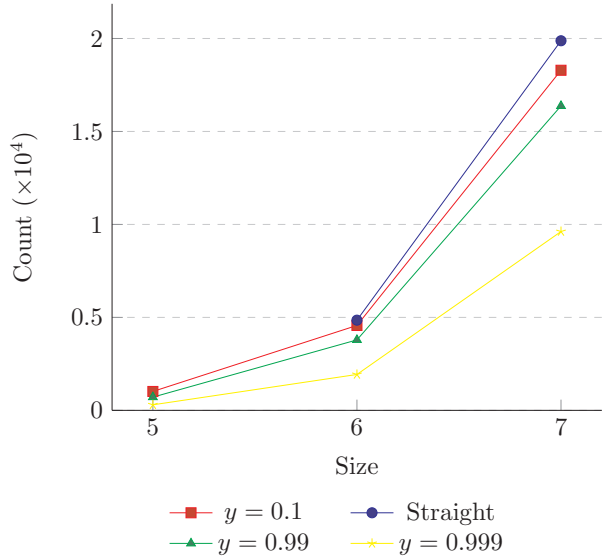

Note that in the experiments for motif inference, the number of found
motifs can grow exponentially.  Even when starting with a small set of
motifs, the incremental steps may lead to a large result set. In the
parallel environment, for motifs of size up to 12, the algorithm took
250 seconds for $y = 0.9999$ and $t = 30$, and 1137 seconds for $y =
0.999$ and $t = 40$, finding 167,673 and 262,819 topological motifs,
respectively. Ignoring topology in the output, i.e., grouping results
by color, the most frequent set of colors occurred more than $10^{5}$
times for both $y$ values. Importantly, when limiting the number of
motifs kept in memory to a few thousand, we managed to infer motifs up
to size 17 in about 160 seconds in parallel.

We compared Algorithm~\ref{alg:inference} with
\textsf{MOTUS}~\cite{LFS2005,LFS2006} in inference mode, using default
parameters (Table~\ref{tab:comp-motus}). While our algorithm
calculates $y$-motif scores and discards those below a given $y$
value, the only way of accessing motif exceptionality in
\textsf{MOTUS} is generating random networks, by default 100,
calculating and displaying a p-value. We grouped results output by
Algorithm~\ref{alg:inference} by color sets, allowing the analysis of
motifs disregarding their topologies, additionally displaying results
sorted by $y$-motif score.  Our incremental method is fast even for
greater sizes (see Table~\ref{tab:comp-motus}), while maintaining the
quality of results close to those obtained when generating and
counting all possibilities (non-incremental), as shown in
Fig~\ref{fig:incr-straight}. Furthermore, the speedup is significant
for tests performed in parallel.  We observe that \textsf{MOTUS} is
considerably slower, which is reasonable, since the set of colored
motifs is a superset of colorful motifs. However, to the best of our
knowledge there is no other tool available for colorful motif
inference which may be used for comparison.

\begin{table}[ht]\setlength{\tabcolsep}{1.8em}\setlength\extrarowheight{1pt}\centering
\caption{\label{tab:comp-motus} Running time for motif inference ($s = 4$, $y = 0.999$).}%
\begin{tabular}{crrr}
\hline
\multirow{2}{*}{Motif size} & \multicolumn{1}{c}{\multirow{2}{*}{\textsf{MOTUS}}} &
\multicolumn{2}{c}{Algorithm~\ref{alg:inference}}\\ \cline{3-4} 
           &                    & sequential  & parallel \\ \hline
4          & 200 s   & 3.9 s        & 2.1 s     \\
5          & 2100 s & 4.0 s        & 2.3 s     \\
6          & 22000 s & 4.8 s        & 3.2 s     \\
7          & 220000 s & 13.4 s       & 4.1 s     \\ \hline
\end{tabular}
\end{table}

We also ran experiments described in~\cite{Bruckner-etal-2010} to
compare Algorithm~\ref{alg:all-colorful} (\textsc{All-Colorful}) to
\textsf{Torque} regarding searching known protein complexes, counting
how many of them occur at least once in a PPI network (a
\emph{match}).  To evaluate the quality of the matches found we used
the \emph{Functional Coherence} method with the same parameters as
in~\cite{Bruckner-etal-2010}. A set of proteins (match) found to be
functionally coherent by the
\href{http://go.princeton.edu/cgi-bin/GOTermFinder}{\sf GO TermFinder}
tool~\cite{Boy04} with respect to the Gene Ontology (GO)
annotation~\cite{GO00} has a good quality and is likely to be used in
prediction of GO annotations for these proteins, when not available.
The network data was obtained from the
\href{http://www.cs.tau.ac.il/~Ebnet/TORQUE_Input_format.htm}{\sf
  Torque} website, for which we had to add some missing protein
sequences, while query motifs used on its experiments were found in
the \href{http://igm.univ-mlv.fr/AlgoB/gramofone/}{\sf GraMoFoNe}
website~\cite{Blin-etal-2010}. The threshold was set to $10^{-7}$ for
BLAST e-values and to $0.0$ for protein interaction probabilities,
meaning we connect pairs of vertices representing proteins with
interaction probability greater than zero.  The queries were processed
by our algorithm in parallel, and about 97\% of them finished in no
more than 3 seconds.  Table~\ref{tab:comp-torque} presents the total
number of \textsf{Torque} matches (novel and previously known)
reported in their paper compared with the counts found by the
Algorithm~\ref{alg:all-colorful}. The table also presents how many
matches found by the latter are functionally coherent. Compared
to~\cite{Bruckner-etal-2010}, the number of matches found to be
functionally coherent is very satisfactory.

\begin{table}[ht]\setlength{\tabcolsep}{1em}\setlength\extrarowheight{1pt}\centering
\caption{\label{tab:comp-torque} Protein complex (motif) search in PPI networks.}%
\begin{tabular}{cccccc}
\hline
\multirow{2}{*}{Network} & \multirow{2}{*}{Complex} &
\multicolumn{2}{c}{Matches} & \multicolumn{2}{c}{Functional Coherence} \\ \cline{3-4} \cline{5-6}
        &      	 & \textsf{Torque} & Alg.~\ref{alg:all-colorful} & \textsf{Torque} & Alg.~\ref{alg:all-colorful}
\\ \hline  
Yeast	& Bovine & 4 	  & 4	& 4  & 3  \\
        & Mouse	 & 18	  & 19  & 16 & 12 \\
        & Rat	 & 26	  & 22  & 19 & 8  \\
Fly	& Bovine & 0	  & 1   & 0  & 1  \\
	& Mouse	 & 13	  & 21  & 0  & 7  \\
	& Rat	 & 35	  & 35  & 17 & 8  \\
Human	& Bovine & 4	  & 7   & 2  & 5  \\
	& Mouse	 & 58	  & 113 & 32 & 66 \\
	& Rat	 & 49	  & 111 & 32 & 55 \\ \hline

\end{tabular}
\end{table}

\section{\label{sec:conc} Conclusion}

In this work we studied the search and inference of different
constraints of topological colored motifs in vertex-colored
graphs. Such studies could help us in the understanding of the
computational complexity for many related problems.

Considering a motif as a colorful tree $T$ and an occurrence as a
subset $S$ of vertices of a vertex-colored graph $G$ such that $|V_T|
= |S|$ and $T \cong G[S]$, we gave a simple linear time algorithm for
the searching problem. We also presented a method for enumerating all
occurrences of $T$ in $G$ and described a linear time algorithm to
compute the number of occurrences of $T$ in $G$. For the inference
problem, we developed a variant of the Erdös model, where we take into
account the colors of vertices as a parameter to calculate the
probability of a given colorful tree $T$ being a motif. The comparison
between the number of occurrences of a colorful tree $T$ in a given
subgraph and the expectation of the number of occurrences of $T$ in a
random graph obtained by the presented method can decide whether $T$
is a motif in $G$ or not, which takes time exponential in the size of
$V_T$, but is very fast in practice if $|V_T| \leq 20$.

When there exists a high rate of noise in a biological network,
especially due to missing/wrong data, we can allow finding occurrences
of motifs with gaps, meaning that some vertices can be part of an
occurrence of a motif in the network, not in the motif itself. Thus,
it is important to allow a flexible search of motifs when we cannot
find exact matches, only approximate. An approximate occurrence of a
motif is referred to as a \emph{gap} among some vertices of the given
network~\cite{LFS2006}. Regarding highly frequent motifs with gaps,
notice first that we can obtain the number of subgraphs of a given
vertex-colored graph that are isomorphic to a colorful motif with gaps
by making few changes to an algorithm given in~\cite{AS2013}. Then,
considering induced subgraphs, given a random vertex-colored graph $G$
and a colorful tree motif $T$ with gaps, a naive method to calculate
the expected value and variance of the number of occurrences is quite
slow. In this case, we can compute those measures using the following
strategy. Let $\mathcal{G}$ be the set of all random vertex-colored
graphs defined from $G$. The expected value and variance of the number
of occurrences of $T$ in $G$ are

\begin{align*}
  \Esp[X] &= \sum_{R \in \mathcal{G}} X(R) \cdot
  \Prob{R}\:, \hspace{0.8cm} \text{and} \\ 
  \Var[X] &= \sum_{R \in \mathcal{G}} (X(R) - \Esp[X])^2 \cdot
  \Prob{R}\:,  
\end{align*}

\noindent where $R$ is a random graph generated, according to the
previous section, and $X(R)$ and $\Prob{R}$ denote the number of
occurrences of $T$ in $R$ and the probability of $R$ to be generated,
respectively. Then, the main idea is generating explicitly all graphs
in $\mathcal{G}$ and counting the number of occurrences of each of
them. However, notice that $|\mathcal{G}|$ can be very large. Thus, we
generate randomly a smaller set $\mathbb{G}$ of random vertex-colored
graphs, representing a sample from $\mathcal{G}$, and then calculate

\begin{align}
  \Esp[X] &\simeq \frac{1}{|\mathbb{G}|} \sum_{R \in \mathbb{G}}
  X(R) \hspace{0.5cm} \text{and} \nonumber \\ 
  \Var[X] &\simeq \frac{1}{|\mathbb{G}|} \sum_{R \in \mathbb{G}} (X(R)
  - \Esp[X])^2\:. \label{eq:sample-var}
\end{align}

\noindent This approach can also be used when gaps are not considered
and $|V_T|$ is large, especially for calculating $\Var[X]$. However,
these ideas should be extended further.

Our concept of occurrence for topological motifs extends to weights of
occurrences, gaps in the networks, and deletions in motifs. The
results of such extensions can be used for proposing more general
problems, where the structure of the motifs is not considered.

\bibliographystyle{amsplain}
\bibliography{pre-print}

\end{document}